\documentclass[runningheads]{llncs}
\usepackage[T1]{fontenc}
\usepackage{graphicx}
\usepackage{amsmath}
\usepackage{amssymb}
\usepackage{listings}
\usepackage{xcolor}
\usepackage{url}
\usepackage{graphicx} 
\usepackage{xcolor}
\usepackage[scaled=0.8]{beramono}
\usepackage[T1]{fontenc}
\usepackage{array}
\usepackage{cleveref}
\usepackage{centernot}
\usepackage{mathtools}
\usepackage{stmaryrd}
\usepackage{mathpartir}
\usepackage{relsize}
\usepackage{adjustbox}
\usepackage{caption} 
\usepackage[numbers,sort&compress]{natbib}
\usepackage{enumerate}

\newcommand{\m}[1]{\mathsf{#1}}
\newcommand{\mb}[1]{\mathbf{#1}}
\newcommand{\mc}[1]{\mathcal{#1}}
\newcommand{\mt}[1]{\mathtt{#1}}
\newcommand{\mi}[1]{\mathit{#1}}

\newcommand{\ichoice}[1]{\oplus\{#1\}}
\newcommand{\tensor}{\otimes}
\newcommand{\one}{\mathbf{1}}
\newcommand{\texists}[2]{\;?\{#1\}.#2}
\newcommand{\tforall}[2]{\;!\{#1\}.#2}

\newcommand{\eclose}[1]{\m{close} \; #1}

\newcommand{\st}{<:}

\newcommand{\rt}[2]{\m{#1}[#2]}

\newcommand{\echoice}[1]{\&\{#1\}}

\newcolumntype{C}{>{$}c<{$}} 
\newcolumntype{L}{>{$}l<{$}} 
\newcolumntype{R}{>{$}R<{$}}

\newcommand{\semi}{\mathrel{;}}

\makeatletter
\newcommand{\xMapsto}[1]{\ext@arrow 0599{\Mapstofill@}{}{#1}}
\def\Mapstofill@{\arrowfill@{\Mapstochar\Relbar}\Relbar\Rightarrow}
\makeatother

\newcommand{\lolli}{\multimap}

\newcommand{\fwd}[2]{{#1} \leftrightarrow {#2}}
\newcommand{\fresh}[1]{(#1 \; \m{fresh})}

\newcommand{\ecase}[3]{\m{case} \; #1 \; (#2 \Rightarrow #3)}
\newcommand{\esend}[2]{\m{send} \; #1 \; #2}
\newcommand{\erecv}[2]{#2 \leftarrow \m{recv} \; #1}
\newcommand{\ewait}[1]{\m{wait} \; #1}
\newcommand{\espawn}[3]{#1 \leftarrow #2 \; #3}
\newcommand{\eassert}[2]{\m{assert} \; #1 \; \{#2\}}
\newcommand{\eassume}[2]{\m{assume} \; #1 \; \{#2\}}

\newcommand{\with}{\,\&\,}

\definecolor{dkgreen}{rgb}{0,0.4,0}
\definecolor{ltblue}{rgb}{0,0.4,0.4}
\definecolor{dkblue}{rgb}{0,0,0.6}
\definecolor{dkviolet}{rgb}{0.3,0,0.5}
\definecolor{codegreen}{rgb}{0,0.6,0}
\definecolor{eminence}{RGB}{108,48,130}

\lstdefinelanguage{prest}{
  morekeywords=[1]{
    define, proc, yield, type, decl
  },
  morekeywords=[2]{
    flip, match, case, pcase, send, recv, assert, assume, wait, close
  },
  morekeywords=[3]{},
  tabsize=2,
  sensitive=true,
  breaklines=false,
  extendedchars=true,
  alsoletter=*,
  captionpos=b,
  morecomment=[l]{//},
  mathescape=true,
  basicstyle=\linespread{1}\ttfamily,
  identifierstyle={\ttfamily\color{black}},
  keywordstyle=[1]{\ttfamily\color{dkblue}},
  keywordstyle=[2]{\ttfamily\color{dkviolet}},
  keywordstyle=[3]{\ttfamily\color{ltblue}},
  stringstyle=\ttfamily,
  commentstyle={\ttfamily\color{dkgreen}},
  literate=
  {|-}{$\vdash\ $}{2}
  {<-}{$\ \leftarrow\ $}{2}
  {=>}{$\ \Rightarrow\ $}{3}
  {<->}{$\ \leftrightarrow\ $}{2}
  {|\{*\}-}{$\mathlarger{\mathlarger{\vdash^*}}$}{3}
}

\newcommand{\lst}[1]{\lstinline[mathescape,language=prest]!#1!}

\newcommand{\num}[1]{\overline{#1}}
\newcommand{\eif}[3]{\m{if} \; #1 \; \m{then} \; #2 \; \m{else} \; #3}

\begin{document}
\title{Practical Refinement Session Type Inference (Extended Version)}
%
%
%
\authorrunning{T. Ueno \and A. Das}
\author{
    Toby Ueno\inst{1}
    \and 
    Ankush Das\inst{2}
}

\institute{
    University of Edinburgh, UK\\
    \email{toby.ueno@ed.ac.uk}
    \and
    Boston University, Boston, MA, USA \\
\email{ankushd@bu.edu}}
%
\maketitle              
\begin{abstract}
Session types express and enforce safe communication in concurrent message-passing systems by
statically capturing the interaction protocols between processes in the type.
Recent works extend session types with arithmetic refinements, which enable additional
fine-grained description of communication, but impose additional annotation burden on the programmer.
To alleviate this burden, we propose a type inference algorithm for a session type system
with arithmetic refinements.
We develop a theory of subtyping for session types, including an algorithm which we prove sound
with respect to a semantic definition based on type simulation.
We also provide a formal inference algorithm that generates type and arithmetic constraints,
which are then solved using the Z3 SMT solver.
The algorithm has been implemented on top of the Rast language, and includes 3 key optimizations
that make inference feasible and practical.
We evaluate the efficacy of our inference engine by evaluating it on 6 challenging benchmarks,
ranging from unary and binary natural numbers to linear $\lambda$-calculus.
We show the performance benefits provided by our optimizations in coercing Z3 into solving the
arithmetic constraints in reasonable time.
\keywords{Session Types \and Type Inference \and Refinement Types}
\end{abstract}

\section{Introduction}

Binary session types~\cite{hondaDyadicInteraction1993,Honda98esop,Honda08POPL,cairesSessionTypes2010,Caires16mscs}
provide a structured way to express communication protocols
in concurrent message-passing systems.
Types are assigned to bi-directional channels connecting processes
and describe the type and direction of communication.
Type checking then ensures that processes on either end of the channel
respect the interaction described by the channel type, i.e., they follow
both the type and direction.
As a result, standard type safety theorems guarantee session fidelity
(preservation), i.e., protocols are not violated at runtime and
deadlock-freedom (progress), i.e., no process ever gets stuck.
However, even in the presence of recursion, vanilla session types
can only specify very basic protocols which has led to
a number of extensions, such as context-free session types~\cite{Thiemann16ICFP,Silva23CONCUR},
label-dependent session types~\cite{Thiemann20POPL}, and general dependent session
types~\cite{WuX17Arxiv,Toninho11PPDP,Toninho21PPDP,Griffith13NFM}.

One such extension is to integrate session types with refinements~\cite{Das20FSCD,dasRefinements2020,dasVerified2020},
in the style of DML~\cite{Xi99popl,pfenningRefinementTypes} to enable
lightweight verification of message-passing programs.
For instance, consider the simple (unrefined) session type for naturals:
\[
    \mi{type} \; \m{nat} \triangleq \ichoice{\mb{succ} : \m{nat}, \mb{zero} : \one }
\]
The communication protocol governed by this type is that the sender can either
produce the $\mb{succ}$ label or the $\mb{zero}$ label, on which the receiver
must branch upon.
However, the type does not give any information beyond this basic behavior.
In particular, the type does not define exactly how many $\mb{succ}$ labels
will be produced by the type.
To address this limitation, one can refine the type:
\[
    \mi{type} \; \m{nat}[n] \triangleq \ichoice{\mb{succ} : \; ?\{n > 0\}.\, \m{nat}[n-1], \mb{zero} :\; ?\{n = 0\}.\, \one }
\]
The type is indexed by a natural number $n$ that denotes the number of $\mb{succ}$
labels produced by the sender.
On sending the $\mb{succ}$ label, the type $?\{n > 0\}.\, \m{nat}[n-1]$ describes
that the sender must produce a \emph{proof} that $n > 0$, after which the type
transitions to $\m{nat}[n-1]$, meaning that the sender will now produce
$(n-1)$ $\mb{succ}$ labels.
Such refinements enable lightweight verification of protocol processes;
for instance, one can define that an $\m{add}$ process can take two naturals
$\m{nat}[m]$ and $\m{nat}[n]$ to produce $\m{nat}[m+n]$.

As useful as refinements are, they introduce a significant syntactic burden
on the programmer.
First, programmers need to annotate process declarations (like $\m{add}$)
with the refined type of all channels connected to it.
Second, they need to add explicit assertions and assumptions in their programs
for successful type checking.
Finally, there is usually very little feedback from the compiler in the
event that the type annotations are incorrect.
As programs get more complicated, so do the annotations, further exacerbating
the cost of this overhead.

To address this syntactic overhead, this paper proposes an \emph{automatic
type inference} algorithm for refinement session types.
The goal is to infer the (possibly recursive) type of processes from their definitions.
However, this task poses several challenges, both in theory and implementation.
First, type checking, and therefore, type inference is \emph{undecidable}~\cite{dasRefinements2020}
even in the presence of simple linear arithmetic refinements.
Thus, any practical algorithm must be approximate and rely on appropriate
heuristics.
Second, session types follow \emph{structural} and \emph{equi-recursive} typing;
type equality does not depend on type names, rather the structure of types.
Moreover, there are no explicit folds or unfolds in programs to handle recursion.
This more expressive nature of typing introduces complications for the inference engine,
which needs to keep track of the type structure based on its communication. 
Third, the most \emph{general} refined type for a process is often not
the most useful one.
As an example, consider a process that produces $\m{nat}[0]$ by
sending a $\mb{zero}$ label and terminating.
The most general type of this process is $\ichoice{\mb{zero} : \one}$,
which is a \emph{subtype} of $\m{nat}[0]$, but probably not the most useful
type.
On the other hand, producing \emph{any} supertype of the most general type
is also not satisfactory.
As an example, we can infer many possibilities for index arguments;
for example, a correct type for the $\m{add}$ process is to take
$\m{nat}[2m]$ and $\m{nat}[n]$ to produce $\m{nat}[2m+n]$.
Although correct, this is not the most precise type for $\m{add}$.
In particular, this type would not work for numbers where the
first argument is odd.
Therefore, the most general type is not always useful— the type that is reported to the user must be general enough
to accommodate a large set of programs.

To infer a more realistic type for processes, we
extend the notion of subtyping in session types
to arithmetic refinements.
To that end, we adopt and generalize the declarative definition for subtyping
introduced in seminal work by~\citet{gaySubtypingSessionTypes2005}.
Using this definition as our foundation, we introduce a set of algorithmic
subtyping rules that produce a set of type constraints from which
the most general type can be inferred.
We show that this algorithm is \emph{sound}: if $A$ is a subtype of $B$
according to our algorithm, it must be so under our definition.
Next, we introduce a range of heuristics in our inference engine that carefully balance
precision and practicality to achieve user-friendly types.

We have implemented our type inference engine on top of the Rast~\cite{dasWorkAnalysis2018,Das18ICFP,Das20FSCD}
programming language.
Rast is a language targeted towards analyzing parallel and sequential complexity
of session-typed message-passing programs.
Rast supports type refinements and type checking, but not type inference.
Our implementation performs inference in two stages: first inferring the base
session type for each channel, and then inferring the type indices.
Separating out these two stages not only simplifies the inference engine
but rejects ill-typed programs early.
The first stage of inference collects the subtyping constraints and solves them
using a standard unification algorithm.
The second stage extracts the arithmetic constraints from subtyping and ships them
to the Z3 SMT solver~\cite{Moura08Z3}.
To improve the performance of type inference, we introduce three key
\emph{optimizations} in both stages: \emph{(i) transitivity} to eliminate
the intermediate types in a process definition,
\emph{(ii) polynomial templates} to reduce the search space of arithmetic
expressions for Z3, and \emph{(iii) theory of reals} to find a satisfying
assignment faster, which is then converted to natural numbers, if possible.

We analyze the efficacy of our inference engine and the performance improvements
provided by the optimizations using six benchmarks from the Rast language.
These include unary and binary natural numbers indexed by their value, lists
indexed by their size, and linear $\lambda$-calculus expressions indexed by
the size of the term.
We infer the type of a few standard processes, e.g., standard arithmetic operations
on numbers like addition and doubling; standard list operations like append and split;
and evaluation of expressions in the calculus.
Our experiments reveal that transitivity provides an order of magnitude performance
benefit in the first stage.
Moreover, both polynomial templates and theory of reals are important to make
the second stage feasible.
Without these two optimizations, Z3 times out on even the simplest of examples.
We conclude that all three optimizations are necessary to make inference scalable
and practical.

To summarize, the paper makes the following contributions:
\begin{itemize}
    \item A declarative definition of subtyping and a sound algorithm for subtyping of
    refinement session types (Section~\ref{sec:subtyping}),
    \item A type inference algorithm that generates typing and arithmetic constraints
    and its proof of soundness (Section~\ref{sec:inference}),
    \item An implementation (Section~\ref{sec:impl}) of our inference engine along with
    the three optimizations, and
    \item An evaluation (Section~\ref{sec:eval}) of our algorithm on six challenging benchmarks.
\end{itemize}
This extended version of the paper 
includes an appendix, which
contains the complete set of subtyping and inference rules
along with proofs of soundness.

\vspace{-1em}
\section{Motivating Examples}
\label{sec:overview}
To highlight the main challenges underlying our type inference engine, we introduce a
series of motivating examples following the syntax of the Rast language~\cite{Das20FSCD,dasVerified2020}.
Our setting is built on \emph{binary} session types~\cite{cairesSessionTypes2010,Caires16mscs}
which consists of \emph{processes} that exchange messages along bidirectional \emph{channels}.
Processes are classified into \emph{providers} and \emph{clients} making up the two endpoints of a channel,
the former providing the channel to the latter who consumes it.
Generally speaking, the client creates a new channel by \emph{spawning} the corresponding provider
which are then connected by this freshly created channel.
This also establishes a \emph{parent-child} relationship between the processes; the client
acts as the parent while the newly spawned provider acts as the child.
In this sense, a client process can have many children, i.e., many channels connecting it to providers
but a provider can have a single client, i.e., parent.
This parent-client connection generalizes to a tree-shaped topology as the number of processes increase,
but never a cycle, thereby preventing deadlocks.

This duality is expressed by generalizing the linear sequent $A_1, A_2, \ldots, A_n \vdash C$
and assigning channel names to each proposition as follows:
\[
  (x_1 : A_1), (x_2 : A_2), \ldots, (x_n : A_n) \vdash (z : C)
\]
The above judgment describes a process that is a client of channels $x_1, \ldots, x_n$
of corresponding types $A_1, \ldots, A_n$ and a provider of channel $z$ of type $C$.
To communicate, the provider and client perform dual matching actions on the channel,
and this action is exactly governed via \emph{session types}.
Just as a channel's meaning changes depending on whether it is being provided or consumed,
i.e., left or right of the sequent,
so does the type of the channel.
As processes communicate on channels, the type of the channel evolves, thereby capturing
not a single communication but a \emph{session} of messages.

Revisiting the example of natural numbers, the basic type is defined as
\[
	\mi{type} \; \m{nat} = \ichoice{\mb{zero}: \one,\ \mb{succ}: \m{nat}}
\]
The $\oplus$ type is associated with a set of labels $\{\mb{zero}, \mb{nat}\}$ of which
the channel provider can choose one to send.
The type after the message transmission is indicated by the colon: if the provider
sends $\mb{zero}$, the continuation type is $\one$, indicating termination and closing
of the channel;
if instead the provider sends $\mb{succ}$, the type recurses back to $\m{nat}$, meaning
the protocol simply repeats.

Processes can be defined in Rast using a \emph{declaration} describing the process type
and a \emph{definition} describing the implementation.
We follow the declaration and definition of a process called \lst{two}
that represents the number two:
\begin{lstlisting}[language=prest]
decl two : . |- (x: nat)
proc x <- two = x.succ; x.succ; x.zero; close x
\end{lstlisting}
The first line shows the declaration: the left of the turnstile ($\vdash$) shows the
channels (and their types) used by the process, while the right shows the offered
channel and type. The \lst{two} process does not use any channels, hence a dot ($\cdot$)
on the left and offers channel \lst{x} (denoted by \lst{<-}) of type \lst{nat}.
The process definition for \lst{two} (beginning with $\mt{proc}$) expresses that
the process sends exactly two $\mb{succ}$ labels on channel $x$ (\lst{x.k} denotes sending
label \lst{k} on channel \lst{x}) followed by a $\mb{zero}$ label, and then ultimately
terminates by closing the channel \lst{x}.
Without type inference, Rast only supports type checking,
meaning the programmer is required
to provide both the process declaration and definition.
The goal of this work is to automatically generate the declaration given the definition,
in the style of languages like OCaml and SML.

\paragraph*{\textbf{Structural Subtyping in Inference}}
At first, we might try a na\"{i}ve approach at type inference to produce the most general type.
For instance, if we are sending label \lst{k} on offered channel \lst{x}, the most
general type of \lst{x} must be $\ichoice{k: A_k}$, where $A_k$ is the type of channel \lst{x}
after the communication.
If we apply this technique on the process \lst{two} above, we will arrive at the following type:
\begin{lstlisting}[language=prest]
proc x <- two = x.succ;         % x: $\ichoice{\mb{succ} : \ichoice{\mb{succ} : \ichoice{\mb{zero} : \one}}}$
	              x.succ;         % x: $\ichoice{\mb{succ} : \ichoice{\mb{zero} : \one}}$
                x.zero;         % x: $\ichoice{\mb{zero} : \one}$
                close x         % x: $\one$
\end{lstlisting}
It is easier to follow this type inference bottom-up to build the inferred type incrementally.
The last operation on \lst{x} is closing, hence its type at that point must be $\one$,
as indicated by the comment on the right.
Next, we send the label $\mb{zero}$, meaning the most general type for \lst{x} must be
$\ichoice{\mb{zero} : \one}$ because this is the simplest type that allows sending
of label $\mb{zero}$.
Following this intuition for the next two labels, we get the type of \lst{x} to be
$\ichoice{\mb{succ}: \ichoice{\mb{succ}: \ichoice{\mb{zero}: \one}}}$.

This type however is not satisfactory for a variety of reasons.
First, this type is not generalizable: the process types for \lst{one}, \lst{two}, \lst{three}, \ldots
will all be different which, in turn, hampers code reuse.
Also, a programmer would be more inclined to use $\m{nat}$ as the process type.
Second, this type would not be usable for a larger process like $\m{add}$ which
needs to be defined for generic natural numbers, and not every particular number and type.
Second, although the generated type (or the \emph{most general type}) is correct for typechecking
purposes, using this type in type error messages would be unncessarily verbose and will progressively
become more incomprehensible as programs grow in complexity.

Evident from this is the necessity of a notion of \textit{subtyping}:
we would like to have a system where $\ichoice{\mb{succ}: \ichoice{\mb{succ}: \ichoice{\mb{zero}: \one}}}$
is a subtype of $\m{nat}$.
This subtyping system must also account for the \emph{structural} nature of types rather than
a \emph{nominal} one.
This is in contrast with refinement type systems for functional languages like
Liquid Types~\cite{Rondon08PLDI} and DML~\cite{Xi99popl} where the base types can be inferred
using a standard Hindley-Milner inference algorithm.
As will be evident from our subtyping algorithm, there is no notion of base types in structural
type systems.
Concretely, this decision entails that message labels like $\mb{zero}$ and $\mb{succ}$ are not tied to the $\m{nat}$ type specifically, unlike e.g. the constructors of an algebraic datatype in most functional languages.
For instance, we can use the same constructors for multiple types as follows:
\[
  \mi{type} \; \m{even} = \ichoice{\mb{zero}: \one,\ \mb{succ}: \m{odd}} \qquad
  \mi{type} \; \m{odd} = \ichoice{\mb{succ}: \m{even}}
\]
And both $\m{even}$ and $\m{nat}$ are valid types for process \lst{two}.
Thus, our algorithm for subtyping must analyze the (possibly mutually
recursive) structures of types.
Our approach is to infer the most general type that is \emph{provided in the signature} by the programmer.
With subtyping in place, we can say that a process
offers a channel of the most general type \emph{or any valid supertype}, thereby significantly
increasing readability while maintaining correctness.

\vspace{-0.4em}
\subsection{Introducing Refinements}

The notion of subtyping becomes more involved in the presence of \emph{arithmetic refinements}.
For instance, revisiting the refinement of natural number type:
\[
  \m{type}\ \m{nat}[n] = \ichoice{\mb{zero}: \texists{n=0}{\one},\ \mb{succ}: \texists{n>0}{\m{nat}[n-1]}}
\]
The refinements enforce that the provider is only permitted to send the $\mb{zero}$ label when $n=0$,
and the $\mb{succ}$ when $n>0$, respectively.
By specifying refinements on types, we are able to convey more type-level information about our processes:
for instance, we can guarantee that the amended $\texttt{two}$ process will send \emph{exactly}
two $\mb{succ}$ messages if the offered type is $\m{nat}[2]$.
\begin{lstlisting}[language=prest]
decl two : . |- (x: nat[2])
proc x <- two = x.succ; assert x {2 > 0}; 
                x.succ; assert x {1 > 0};
                x.zero; assert x {0 = 0}; close x
\end{lstlisting}
We describe the implementation of the process to show how the proof obligations are upheld.
Each time we send a $\mb{succ}$ message, we need to send a proof that $n > 0$, which is achieved
using an assertion.
Therefore, the process first asserts that $\{2 > 0\}$ after sending $\mb{succ}$
when the type of \lst{x} is $\m{nat}[2]$.
After this, the type transitions to $\m{nat}[1]$, hence the process sends another $\mb{succ}$
followed by asserting that $\{1 > 0\}$.
Finally, the type transitions to $\m{nat}[0]$, hence the process sends the $\mb{zero}$ label
and an assertion that $\{0 = 0\}$ and closes.

As can be observed from this example, these assertions add a significant syntactic overhead
on the programmer.
To eliminate this burden, our inference algorithm also performs \emph{program reconstruction}
which inserts these assertions automatically.
Our inference engine deduces these assertions by following the type structure
of the refined nat type.
Remarkably, this reconstruction is performed effectively even if multiple types
(e.g., even, odd, nat) are using the same label constructors.

\paragraph*{\textbf{Refinement Inference.}}
In addition to inferring type names like $\m{nat}$, we must also infer the \emph{refinements} on those types. These refinements are not guaranteed to be numbers, as in the \lst{two} process; rather, they are arbitrary expressions which can involve free arithmetic variables. The following process, which adds two numbers, is shown after reconstruction, but before refinement inference:
\begin{lstlisting}[language=prest]
decl add[m][n]: (x: nat[e0(m,n)]) (y: nat[e1(m,n)]) |- (z: nat[e2(m,n)])
proc z <- add[m][n] x y = case x (
    zero => assume x {e0(m,n) = 0}; wait x; z <-> y
  | succ => assume x {e0(m,n) > 0};
        z.succ; assert z {e2(m,n) > 0};
        z <- add[m-1][n] x y )
\end{lstlisting}
The \lst{add} process case analyzes on \lst{x}. If \lst{x} sends the $\mb{zero}$ label
(meaning $e_0(m, n) = 0$), we simply \emph{identify} channels \lst{y} and \lst{z}
(equivalent to returning \lst{y}) in a functional setting.
On the other hand, in the $\mb{succ}$ branch, where we 
\emph{assume} that the refinement on \lst{x} is greater than 0,
we send the $\mb{succ}$ label on \lst{z} followed by asserting that $e_2(m, n) > 0$
(as required by $\m{nat}$ type).
Finally, we recurse by calling the $\m{add}$ process again.

Suppose we have already determined that each channel has type $\m{nat}$;
we must still find expressions $e_0$, $e_1$, and $e_2$ in variables $m, n$
such that all of our assertions hold.
We delegate this non-trivial task to Z3: we first identify any \emph{constraints}
on any candidate expressions, e.g. that $e_0(m, n)=0$ implies $e_2(m, n) = e_1(m, n)$ since channel $y$
is forwarded on to $z$ in the $\mb{zero}$ branch.
Similarly, in the $\mb{succ}$ branch, we get that $e_0(m, n) > 0$ implies $e_2(m, n) > 0$.
With these constraints, we query Z3 to find a satisfying assignment for our refinements.
In the case of \lst{add}, we would expect that $e_0(m, n)=m$, $e_1(m, n) = n$, and $e_2(m, n)=m+n$.

Naturally, our first attempt was to treat $e_i$'s as \emph{uninterpreted functions}, which Z3 has the ability
to solve for.
We simply shipped our constraints over to Z3, asking for a satisfying assignment.
In practice, however, this approach is not feasible: either the solver times out or it returns a
non-polynomial expression that cannot be expressed in our refinement layer, such as if-then-else constructs.
Thus, we first substitute each expression with a \emph{polynomial template}:
for instance, we transform $e_0(m,n)$ into $c_0m+c_1n+c_2$, and only ask Z3 to solve for the
coefficients $c_i$'s.
Any expression in our language takes this form, so we lose no generality, and we find that this
optimization greatly improves both the reliability and performance of Z3.
In fact, polynomial templating is only one such optimization which enables our algorithm to
succeed on reasonably complex examples; we detail others in Section \ref{sec:impl}.

\vspace{-1em}
\section{Background on Refinement Session Types}\label{sec:background}

We describe the language of session types upon which we implement type inference.
We first introduce the basic session types, which constitute the core of our communication protocols,
followed by the refinement layer, which extends the core with constructs to send and receive proofs
and witnesses. We include detailed examples of each type constructor in Appendix~\ref{appendix-ex}.

\vspace{-0.4em}
\subsection{Basic Session Types}

Our basic session type system is in correspondence with intuitionistic linear logic~\cite{cairesSessionTypes2010}.
The types allow exchange of labels, other channels, and close, i.e., termination messages.
The type and process syntax is defined as:
\begin{align*}
A, B ::=\ &\ichoice{\ell: A_\ell}_{\ell \in L} \mid \echoice{\ell: A_\ell}_{\ell \in L}
&& \text{(internal and external choice)}\\
\mid\ &A \tensor B \mid A \lolli B &&\text{(tensor and lolli)}\\
\mid\ &\one \mid V &&\text{(unit and type variable)}\\
P ::=\ &x.k \semi P  \mid \ecase{x}{\ell}{P_\ell}
&& \text{(send and receive labels)}\\
\mid\ &\esend{x}{e} \semi P \mid \erecv{x}{y} \semi P
&& \text{(send and receive channels)}\\
\mid\ &\eclose{x} \mid \ewait{x} \semi P
&& \text{(close and wait for close)}\\
\mid\ &\fwd{x}{y} \mid \espawn{x}{f}{\num{y}} \semi P
&& \text{(forward and spawn)}
\end{align*}

The structural types are divided into pairs of types— namely, $\oplus$ and $\with$ form a pair, as do $\tensor$ and $\lolli$.
The partner of a type exhibits the same sort of communication in the opposite direction.
For instance, a provider of a channel $x$ of type $\ichoice{\ell: A_\ell}_{\ell \in L}$,
must \emph{send} one of the labels $k \in L$ via the expression $x.k$ and continue as $A_k$.
Dually, if a provider offers $x : \echoice{\ell: A_\ell}_{\ell \in L}$, it \emph{receives}
a label in $L$, which it case-analyzes via the expression $\ecase{x}{\ell}{P_\ell}$;
the process continues as $P_\ell$ and the channel as $A_\ell$.
The other types, $\tensor$ and $\lolli$, are analogous, but for sending \emph{channels} instead of labels:
provider of $x : A \tensor B$ will send a channel $e$ of type $A$ via $\esend{x}{e}$.
Conversely, a provider of $x : A \lolli B$ will receive a channel of type $A$ which it binds to $y$
via $\erecv{x}{y}$.
Finally, a provider of $(x: \one)$ must use $\eclose{x}$ to terminate channel $x$ and the process,
while the client takes the form $\ewait{x} \semi P$, i.e., waits for $x$ to close and then continue as $P$.
The remaining pieces of syntax, $\fwd{x}{y}$ and $\espawn{x}{f}{\num{y}}$, refer to forwarding and spawning, respectively.
Forwarding $\fwd{x}{y}$ means that we identify channels $x$ and $y$ passing all messages between them,
and spawning a new process $f$ is analogous to a function call in other languages:
we provide $f$ with channels $\num{y}$ and receive a channel $x$ back.
Finally, note that a process \emph{offers} one channel and \emph{consumes} any number of other channels.
And the role of a type flips based on whether it is provided or consumed.

\vspace{-0.4em}
\subsection{Refinement Layer}

Types are refined using the following type and process constructs:
\begin{align*}
A ::=\ &\cdots\ \mid\ V[\num{e}] &&\text{(indexed type variable)}\\
\mid\ &\texists{\phi}{A} \mid \tforall{\phi}{A}
&&\text{(assertion and assumption)}\\
\mid\ &\exists n.A \mid \forall n.A 
&&\text{(quantifiers)}\\
P ::=\ &\cdots\ \mid\ \eassert{x}{\phi} \semi P \mid \eassume{x}{\phi} \semi P
&& \text{(assert and assume)}\\
\mid\ &\esend{x}{\{e\}} \semi P \mid \erecv{x}{\{n\}} \semi P_n
&& \text{(send and receive witnesses)}
\end{align*}

The crucial types for verifying program behavior are ?, and its dual, !.
When a process provides $(x: \texists{\phi}{A})$, it sends a \emph{proof} of $\phi$ along $x$ via $\eassert{x}{\phi}$,
and the client of $x$ receives this proof of $\phi$ via $\eassume{x}{\phi}$.
Note that no actual proof objects are sent at runtime; instead, we merely communicate that such a proof exists.
As before, the dual $(x: \tforall{\phi}{A})$ reverses the direction of communication,
allowing for the provider of $x$ to receive a proof and for a client to send one.
We also allow channels to send and receive \emph{witnesses} through the quantifier types $\exists$ and $\forall$.
A provider of $(x: \exists n.\, A)$ sends a witness expression $e$ via $\esend{x}{\{e\}}$, a client of
$x$ receives the value of $e$ for $n$ and substitutes $[e/n]A$ in the continuation
via $\erecv{x}{\{n\}}$.
Type $\forall n.\, A$ exhibits dual behavior.
We use these witnesses to communicate natural numbers which can themselves be used in future assertions
and assumptions. For instance, we could write $\exists k. \texists{n=2*k}{\m{nat}[k]}$ to signify that
$n$ is even and $k$ is its witness.

The language for arithmetic expressions $e$ and propositions $\phi$ is standard and described below.
We use $i$ to denote constant numbers and $n$ for variables.
\begin{align*}
e &::= i\ |\ e + e\ |\ e - e\ |\ e * e\ |\ n\\
\phi &::= e=e\ |\ e>e\ |\ \top\ |\ \bot\ |\ \neg \phi \ |\ \phi \lor \phi
\end{align*}

\noindent Although, in principle, our grammar allows arbitrary polynomial arithmetic expressions,
most of our examples are restricted to linear expressions.
We found that for any higher-degree polynomials, inference via Z3 becomes infeasible.

\vspace{-0.4em}
\subsection{Type Variables and Signatures}

We operate within a signature $\Sigma$ containing \emph{type definitions}, 
\emph{process declarations}, and \emph{process definitions} as follows:

\begin{align*}
\Sigma ::=\ &\cdot \mid \Sigma,V[\num{n} \mid \phi] = A &&\text{(type definition)}\\
\mid\ &\Sigma,\Delta \vdash f[\num{n}] :: (z: A) &&\text{(process declaration)}\\
\mid\ &\Sigma,\espawn{x}{f[\num{n}]}{\num{y}} = P  &&\text{(process definition)}\\
\Delta ::=\ &\cdot\ |\ \Delta, (x: A) &&\text{(process context)}
\end{align*}

The type definition $V[\num{n} \mid \phi] = A$ means that $V$ is indexed by some refinements $\num{n}$
such that $\phi$ holds; when we write $V[\num{e}]$, we interpret this as $A[\num{e}/\num{n}]$.
We adopt an \emph{equirecursive} and \emph{structural} approach to types, foregoing any explicit
communication for unfolding.
The process \emph{declaration} $\Delta \vdash f[\num{n}] :: (z: A)$ tells us that $f$ offers a
channel $z$ of type $A$, and that $f$ consumes all channels $x$ of type $A$ in $\Delta$.
The process $f$ is also indexed by some refinements $\num{n}$ which can be freely used in the types in
$\Delta$ and $A$.
Its corresponding \emph{definition} $\espawn{x}{f[\num{n}]}{\num{y}} = P$ is likewise indexed by $\num{n}$,
but instead of type information, gives us the process expression $P$ for $f$ using channels $x$ and $\num{y}$.
Again, variables in $\num{n}$ can appear freely in definitions, e.g., in assertions and assumptions.

\vspace{-1em}
\section{Subtyping}\label{sec:subtyping}

To define a subtyping algorithm and prove its correctness, we must first begin with a semantic definition
of subtyping, which is adopted from \citet{gaySubtypingSessionTypes2005}'s definition using a
\emph{type simulation}.

\vspace{-0.4em}
\paragraph*{\textbf{Type Simulation}}

Our definition of a type simulation generalizes prior work by \citet{das20}, who defined a
type \emph{bi}simulation for the purposes of type equality.
We rely upon the notion of \emph{closed} types:
a type is closed if it contains no free arithmetic variables.
To account for types with refinement variables, we introduce the notion of \emph{validity}.
Recall that the signature $\Sigma$ collects \emph{all} type definitions of the form $V[\num{n} \mid \phi]=A$.
This signature is then called \emph{valid} if
the implicit constraint $\phi$ holds for all occurrences $V[\num{e}]$ in the signature.
Formally, if $V[\num{e}]$ appears in any type definition in $\Sigma$, then $\vDash \phi[\num{e}/\num{n}]$.
We further require a valid signature to only contain \emph{contractive} type definitions,
disallowing definitions of the form $V[\num{n} \mid \phi] = V'[\num{e}]$.
We include a complete formal definition of valid types and signatures in Appendix~\ref{appendix-valid}.

\begin{definition}
On a valid signature $\Sigma$, we define $\m{unfold}_{\Sigma}(V[\num{e}])=A[\num{e}/\num{n}]$
if $V[\num{n} \mid \phi]= A \in \Sigma$ and $\m{unfold}_{\Sigma}(A)=A$ otherwise (when $A$
is not a type variable).
\end{definition}

A valid signature is essential for a well-defined $\m{unfold}_\Sigma$,
since the contractive condition prohibits us from unfolding to another type variable
(potentially indefinitely).

\begin{definition}[Type Simulation]
\label{tpsim}
A relation $\mathcal{R}$ on closed, valid types is a \emph{type simulation} under $\Sigma$ if, for any types $A, B$ such that $(A, B) \in \mathcal{R}$, when we take $S=\m{unfold}_\Sigma(A)$ and $T=\m{unfold}_\Sigma(B)$, the following holds:
\begin{enumerate}[i)]
\item If $S=\ichoice{\ell : A_\ell}_{\ell \in L}$, then $T=\ichoice{m : B_m}_{m \in M}$. Also, $L \subseteq M$
and $(A_\ell, B_\ell) \in \mathcal{R}$ for all $\ell \in L$.
\item If $S=\echoice{\ell : A_\ell}_{\ell \in L}$, then $T=\echoice{m : B_m}_{m \in M}$. Also, $L \supseteq M$ and $(A_m, B_m) \in \mathcal{R}$ for all $m \in M$.
\item If $S=A_1 \tensor A_2$, then $T=B_1 \tensor B_2$. Also, $(A_1, B_1) \in R$ and $(A_2, B_2) \in R$.
\item If $S=A_1 \lolli A_2$, then $T=B_1 \lolli B_2$. Also, $(B_1, A_1) \in R$ and $(A_2, B_2) \in R$.
\item If $S=\one$, then $T=\one$.
\item If $S=\texists{\phi}{A}$, then $T=\texists{\psi}{B}$. Also, either $\vDash \phi$, $\vDash \psi$, and $(A, B) \in \mathcal{R}$; or $\vDash \neg \phi$.
\item If $S=\tforall{\phi}{A}$, then $T=\tforall{\psi}{B'}$. Also, either $\vDash \psi$, $\vDash \phi$, and $(A, B) \in \mathcal{R}$; or $\vDash \neg \psi$.
\item If $S=\exists m . A$, then $T=\exists n . B$. Also, $\forall i \in \mathbb{N}$ we have $(A[i/m], B[i/n]) \in \mathcal{R}$.
\item If $S=\forall m . A$ then $T=\forall n . B$. Also, $\forall i \in \mathbb{N}$ we have $(A[i/m], B[i/n]) \in \mathcal{R}$.
\end{enumerate}
\end{definition}

Each type constructor has a corresponding case in Definition \ref{tpsim} which intuitively reduces
to the notion that the first type in a pair $A$ should \emph{simulate} the behavior of the second type $B$.
Concretely, the communication behaviors allowed by $A$ must be a subset of behaviors allowed by $B$.
For instance, in case $(i)$ for $\oplus$, we require that the label-set for $A$ is a subset of the
label-set for $B$ i.e. every label that a provider of $A$ can send can be received by a client of $B$
but not vice-versa.
Likewise, case $(iv)$ for $S=A_1 \lolli A_2 $ and $T=B_1 \lolli B_2$ notably adds
$(B_1, A_1)$ to $\mc{R}$ since $\lolli$ (like arrow types) is contravariant in the first argument.
Intuitively, if a process offers $(c : S)$ and we want to type it as $(c : T)$,
the assigned type $B_1$ of the received channel ought to exhibit a \emph{subset} of the behavior of the actual type $A_1$.
Cases $(i)-(v)$ have been borrowed from \citet{gaySubtypingSessionTypes2005}'s subtyping definition.

Worth noting are novel cases $(vi) - (ix)$ as they concern refinements.
Case $(vi)$ states that for $?\{\phi\}.\,A$, $\phi$ must either be true or false.
In the former case, $\psi$ must also be true and $(A, B) \in \mc{R}$.
However, in the latter case, the relation holds vacuously since the provider of such a
channel will cease to communicate, which effectively simulates \emph{any} continuation type.
Case $(vii)$ is analogous but flips the directionality of constraints.
Finally, case $(viii)$ (resp. $(ix)$) say that a quantified type $\exists m.\, A$
(resp. $\forall m.\, A$) simulates another one if all substitutions of $m$ are already
in the type simulation.
With type simulations in tow, we then define subtyping simply as follows:

\begin{definition} \label{defst}
For closed valid types $A$ and $B$, we say $A <: B$, i.e. $A$ is a \emph{subtype} of $B$, if there is a type simulation $\mathcal{R}$ such that $(A, B) \in \mathcal{R}$.
\end{definition}

\vspace{-0.4em}
\subsection{Algorithmic Subtyping}

Subtyping is fundamentally undecidable in the presence of refinements.
This fact is entailed by the fact that even the \emph{simpler} problem of type equality
is undecidable in the presence of refinements, as shown by prior work~\cite{das20}.
Nevertheless, we propose a sound algorithm which approximates subtyping, expressed via a
series of inference rules with a primary judgment of the form $\mc{V} \semi \mc{C} \semi \Gamma \Vdash A \st B$.
Here, $\mc{V}$ and $\mc{C}$ respectively represent the list of free arithmetic variables and
the governing constraint, and we likewise invoke the auxiliary judgment $\mc{V} \semi \mc{C} \vDash \phi$
to represent semantic entailment.
$\Gamma$ is a list of closures of the form
$\langle \mc{V} \semi \mc{C} \semi V_1[\num{e_1}] \st V_2[\num{e_2}] \rangle$ that have already
been encountered which we capture for the purposes of loop detection.
We omit some standard rules for brevity; the full series of subtyping rules can be found in appendix \ref{appendix-st}.

\begin{figure}[t]
\begin{mathpar}
\inferrule*[right=$\m{st}_?$] {
  \mathcal{V} \semi \mathcal{C} \vDash \phi \rightarrow \psi
  \and
  \mathcal{V} \semi \mathcal{C} \land \phi \semi \Gamma \Vdash A \st B
} {
  \mathcal{V} \semi \mathcal{C} \semi \Gamma \Vdash \texists{\phi}{A} \st \texists{\psi}{B}
}
\and
\inferrule*[right=$\m{st}_!$] {
  \mathcal{V} \semi \mathcal{C} \vDash \psi \rightarrow \phi
  \and
  \mathcal{V} \semi \mathcal{C} \land \psi \semi \Gamma \Vdash A \st B
} {
  \mathcal{V} \semi \mathcal{C} \semi \Gamma \Vdash \tforall{\phi}{A} \st \tforall{\psi}{B}
}
\and
\inferrule*[right=$\m{st}_{\exists}$] {
  \fresh{k}
  \and
  \mathcal{V}, k \semi \mathcal{C} \semi \Gamma \vdash A[k/m] \st B[k/n]
} {
  \mathcal{V} \semi \mathcal{C} \semi \Gamma \vdash \exists m . A \st \exists n . B
}
\and
\inferrule*[right=$\m{st}_\bot$] {
  \mathcal{V} \semi \mathcal{C} \vDash \bot
} {
  \mathcal{V} \semi \mathcal{C} \semi \Gamma \Vdash A <: B
}
\and
\inferrule*[right=$\m{st}_\m{expd}$] {
  V_1[\num{v_1} \vert \phi_1] = A \in \Sigma
  \and 
  V_2[\num{v_2} \vert \phi_2] = B \in \Sigma
  \and
  \gamma = \langle \mathcal{V} \semi \mathcal{C} \semi V_1[\num{e_1}] \st V_2[\num{e_2}] \rangle
  \\\\
  \mathcal{V} \semi \mathcal{C} \semi \Gamma, \gamma \Vdash A[\num{e_1}/\num{v_1}] \st B[\num{e_2}/\num{v_2}]
} {
  \mathcal{V} \semi \mathcal{C} \semi \Gamma \Vdash V_1[\num{e_1}] \st V_2[\num{e_2}]
}
\and
\inferrule*[right=$\m{st}_\m{def}$] {
  \langle \mathcal{V}' \semi \mathcal{C}' \semi V_1[\num{e_1}'] \st V_2[\num{e_2}'] \rangle \in \Gamma
  \and
  \mathcal{V} \semi \mathcal{C} \vDash \exists \mathcal{V}' . \mathcal{C}' \land \num{e_1}' = \num{e_1} \land \num{e_2}' = \num{e_2}
} {
  \mathcal{V} \semi \mathcal{C} \semi \Gamma \Vdash V_1[\num{e_1}] \st V_2[\num{e_2}]
}
\end{mathpar}
\vspace{-2em}
\caption{A selection of rules for the subtyping algorithm.}
\vspace{-1.5em}
\label{fig-stalg}
\end{figure}

Figure~\ref{fig-stalg} describes selected rules for subtyping concerning arithmetic refinements, where
the most interesting rules are $\m{st}_?$ and $\m{st}_!$.
The first premise of rule $\m{st}_?$ states that $\phi$ must imply $\psi$, capturing the intuition
of Definition~\ref{tpsim} that either $\phi$ and $\psi$ both hold or $\phi$ is false.
Note that we only say $\phi \rightarrow \psi$ \emph{under some constraint} $\mc{C}$,
but that in Definition~\ref{tpsim} we don't have any such constraint:
this is because type simulations are only defined for closed types.
(We will see in the proof how to bridge this gap.)
The second premise requires that continuation type $A$ must be a subtype of $B$ under the constraint
$\mc{C} \land \phi$.
Rule $\m{st}_!$ is analogous, only flipping the direction of implication, similar to Definition~\ref{tpsim}.
Rule $\m{st}_\exists$ simply introduces a fresh variable $k$ to our variable context $\mc{V}$,
but also requires that we substitute $k$ for the variable specified in each type ($m$ or $n$).
Rule $\m{st}_{\bot}$ handles the cases where the constraint $\mc{C}$ is contradictory.
Under such a constraint, arbitrary types $A$ and $B$ are subtypes since such a situation will never
arise at runtime.
This rule comes in handy when type checking branches that are impossible due to refinements (e.g.
$\mb{zero}$ branch on type $\m{nat}[n+1]$).

The heart of loop detection lies in the $\m{st}_\m{expd}$ and $\m{st}_\m{def}$ rules,
each of which may apply when we encounter type names $V_1[\num{e_1}]$ and $V_2[\num{e_2}]$.
$\m{st}_\m{expd}$ effectively unrolls those type names according to our signature $\Sigma$,
but it also generates a closure $\gamma$ and adds it to our context $\Gamma$,
denoting that we have ``seen'' $V_1$ and $V_2$, refined by these particular expressions
$\num{e_1}$ and $\num{e_2}$, under this particular $\mc{V}$ and $\mc{C}$.
Now if, later in our algorithm, we encounter those same $V_1$ and $V_2$ again,
we eagerly follow the $\m{st}_\m{def}$ rule.
We do not require that our new $\mc{V}'$ and $\mc{C}'$ 
are identical to the $\mc{V}$ and $\mc{C}$ in $\gamma$—
the second premise asserts (informally) that we can find a substitution between them,
and if such a substitution does exist, we conclude that our algorithm succeeds.
This coinductive structure allows our algorithm to handle types which could otherwise be expanded infinitely.

$\m{st}_{\m{expd}}$ requires that both types are unfolded simultaneously,
and so as written, we cannot compare a name to a structure. To circumvent this issue in practice,
we employ a preprocessing step called \emph{internal renaming} \citep{dasNested20}.
For every type definition $V[n]=A$ in our signature, we rename type expression $A$ by
replacing inner type expressions with an explicit fresh name.
For instance, we rewrite the definition for
$\m{list_A} = \ichoice{\mb{nil}: \one,\ \mb{cons}: A \tensor \m{list_A}}$
 of some type (name) $A$ into the following set of definitions:

\vspace{-1.5em}
\begin{align*}
  \m{type}\ \m{list_A} &= \ichoice{\mb{nil}: \one,\ \mb{cons}: \m{\%0}} &
  \m{type}\ \m{\%0} &= A \tensor \m{list_A}
\end{align*}

\noindent where $\%0$ is a fresh name.
We apply this renaming recursively inside type expressions in a way that a type definition
always has the form $A \; \m{op} \; B$ where $\m{op}$ is a type operator.
This way, type names and structural types always \emph{alternate}, in any type expression.
This guarantees that, while applying subtyping algorithmic rules, we also alternate between
comparing names against names, and structural types against structural types.

These $\m{st}_{\m{expd}}$ and $\m{st}_{\m{def}}$ rules are also noteworthy for introducing incompleteness to our algorithm.
On encountering two type names, it is possible to keep applying $\m{st}_{\m{expd}}$ indefinitely,
which would lead to a non-terminating algorithm. To circumvent this problem, we only expand a finite number of times,
and then forcibly apply the $\m{st}_{\m{def}}$ rule. However, this comes at the cost of completeness, since
$\m{st}_{\m{def}}$ may then fail to find a matching closure in $\Gamma$.
In the absence of refinements, this is the only source of incompleteness in subtyping.

\vspace{-0.4em}
\subsection{Soundness} \label{sec:soundness}

We now prove that our algorithm is sound with respect to Definition \ref{defst}.
Proving subtyping of $A \st B$ effectively reduces to \emph{constructing} a type simulation
$\mc{R}$ such that $(A, B) \in \mc{R}$.
The main intuition here is that such a type simulation can be constructed from the subtyping
derivation built using our algorithmic rules.
We here motivate the key ideas; a full proof sketch can be found in Appendix \ref{appendix-proof}.

\begin{definition} \label{paper:forallvstc}
  We define $\forall \mathcal{V} . \mathcal{C} \rightarrow A \st B$ (read: for all $\mc{V}$, $\mc{C}$
  implies $A$ is a subtype of $B$) if there exists a type simulation
  $\mathcal{R}$ such that, for all ground substitutions $\sigma$ over $\mc{V}$ such that
  $\vDash \mathcal{C}[\sigma]$, we get $(A[\sigma], B[\sigma]) \in \mathcal{R}$.
\end{definition}
This definition extends the idea of subtyping from closed types (e.g., $\m{nat}[2]$) to
symbolic types (e.g., $\m{nat}[n]$).
This is necessary since our subtyping algorithm works on symbolic types.
The intuition behind this definition is that $A$ is a subtype of $B$ if for all substitutions $\sigma$
that satisfy $\mc{C}$, $A[\sigma]$ is a subtype of $B[\sigma]$.
As an example, we can say $\forall n. \, n > 0 \rightarrow \m{nat}[n] \st \m{nat}[n]$
because $\m{nat}[1] \st \m{nat}[1]$, $\m{nat}[2] \st \m{nat}[2]$, and so on.

\begin{definition}
Given a derivation $\mathcal{D}$ of $\mathcal{V} \semi \mathcal{C} \semi \Gamma \Vdash A \st B$, we define the set of closures $S(\mathcal{D})$ such that, for each sequent $\mathcal{V}' \semi \mathcal{C}' \semi \Gamma' \Vdash A' \st B'$, we include $\langle \mathcal{V}' \semi \mathcal{C}' \semi A' \st B' \rangle \in S(\mathcal{D})$.
\end{definition}

This set of closures from a derivation is exactly what we need to construct the type simulation.
The key idea here is that for a well-formed derivation, all these closures must represent valid subtyping
relations as well.
Finally, the type simulation is constructed by applying ground substitutions to all closures
found in the derivation.
This intuition is captured in the following theorem.

\begin{theorem}[Soundness]\label{thm:sound}
If $\mathcal{V} \semi \mathcal{C} \semi \cdot \Vdash A \st B$, then $\forall \mathcal{V} . C \rightarrow A \st B$.
\end{theorem}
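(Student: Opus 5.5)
We take the derivation $\mathcal{D}$ of $\mathcal{V} \semi \mathcal{C} \semi \cdot \Vdash A \st B$ and build a single relation from it. Set
\[
\mathcal{R} = \{ (A'[\sigma], B'[\sigma]) \mid \langle \mathcal{V}' \semi \mathcal{C}' \semi A' \st B' \rangle \in S(\mathcal{D}),\ \sigma \text{ ground over } \mathcal{V}',\ \vDash \mathcal{C}'[\sigma] \}.
\]
The root closure is $\langle \mathcal{V} \semi \mathcal{C} \semi A \st B\rangle$, so $(A[\sigma], B[\sigma]) \in \mathcal{R}$ for every $\sigma$ with $\vDash \mathcal{C}[\sigma]$. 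This is exactly what Definition~\ref{paper:forallvstc} asks for. It therefore remains to prove that $\mathcal{R}$ is a type simulation in the sense of Definition~\ref{tpsim}, with all its elements closed and valid; validity comes from the validity of $\Sigma$.

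To check the simulation conditions, we fix a closure in $S(\mathcal{D})$ and a satisfying $\sigma$, and case on the last rule applied at that sequent.

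\textbf{Structural rules.} For $\oplus,\with,\tensor,\lolli,\one$, the premises are closures with the same $\mathcal{V}',\mathcal{C}'$. The instances at $\sigma$ of those premises give the required pairs, reversed for the domain of $\lolli$, and the label inclusions are checked syntactically by the rules.

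\textbf{$\m{st}_?$.} We have $\vDash (\phi\to\psi)[\sigma]$. If $\vDash \neg\phi[\sigma]$, case (vi) holds vacuously. Otherwise $\sigma$ satisfies $\mathcal{C}'\land\phi$, so $\psi[\sigma]$ holds and $(A[\sigma],B[\sigma])\in\mathcal{R}$ via the premise closure. Rule $\m{st}_!$ is symmetric.

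\textbf{$\m{st}_\exists$ and $\m{st}_\forall$.} For each $i\in\mathbb{N}$, the extension $\sigma[k\mapsto i]$ satisfies $\mathcal{C}'$ since $k$ is fresh. The premise closure then yields $(A[i/m][\sigma],B[i/n][\sigma])$.

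\textbf{$\m{st}_\bot$.} No $\sigma$ satisfies $\mathcal{C}'$, so these closures contribute nothing to $\mathcal{R}$.

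\textbf{$\m{st}_\m{expd}$.} The pair of names unfolds, via $\m{unfold}_\Sigma$, to the premise pair under the same $\sigma$. By internal renaming the premise is a structural comparison, whose own last rule supplies the clause.

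The main obstacle is $\m{st}_\m{def}$. Its sequent is a leaf that constrains nothing locally, yet its instances lie in $\mathcal{R}$ and must satisfy the simulation clauses after unfolding. The plan is to use the entailment premise. Given $\sigma$ with $\vDash\mathcal{C}[\sigma]$, it yields a ground $\sigma'$ over $\mathcal{V}'$ with $\vDash \mathcal{C}'[\sigma']$, $\num{e_1}'[\sigma'] = \num{e_1}[\sigma]$ and $\num{e_2}'[\sigma'] = \num{e_2}[\sigma]$. Hence the pair $(V_1[\num{e_1}][\sigma], V_2[\num{e_2}][\sigma])$ coincides with an instance of the closure $\gamma\in\Gamma$. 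That closure was placed in $\Gamma$ by an $\m{st}_\m{expd}$ step lower in $\mathcal{D}$, so it is itself in $S(\mathcal{D})$ and is handled by the $\m{st}_\m{expd}$ case.

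To make this rigorous, we would prove by induction on $\mathcal{D}$ a strengthened invariant: every closure in the $\Gamma$ of any sequent of $\mathcal{D}$ is the conclusion closure of some $\m{st}_\m{expd}$ node in $\mathcal{D}$. Then we can argue that each pair in $\mathcal{R}$ is equal to an instance of a closure whose last rule is not $\m{st}_\m{def}$. A single redirection step suffices, since the target of $\m{st}_\m{def}$ is an $\m{st}_\m{expd}$ node. Because the simulation conditions depend only on the pair of types and not on which closure it came from, these cases together show $\mathcal{R}$ is a type simulation, which completes the proof.
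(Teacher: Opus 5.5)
Your plan is the same as the paper's. You use the same relation $\mathcal{R}$ built from $S(\mathcal{D})$ and satisfying substitutions, you case on the rule applied, and you redirect an $\m{st}_\m{def}$ leaf through its entailment premise, which the paper isolates as a lemma. Your invariant, that every closure in $\Gamma$ is the conclusion of an ancestor $\m{st}_\m{expd}$ node, explains more precisely than the paper why the redirected pair already meets the simulation clauses.

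The genuine gap is the leaf rule $\m{st}_\m{refl}$. You never treat it, and with it your closing claim that $\mathcal{R}$ is a type simulation is false. An $\m{st}_\m{refl}$ node puts pairs $(V[\num{e}][\sigma], V[\num{e'}][\sigma])$ into $\mathcal{R}$, but it has no sub-derivation. So nothing in $S(\mathcal{D})$ supplies the pairs the clauses demand after unfolding. Your redirection does not help either, since such a node points to no $\m{st}_\m{expd}$ node. Concretely, let $\mathcal{D}$ be the single $\m{st}_\m{refl}$ step deriving $\cdot \semi \top \semi \cdot \Vdash \m{nat}[2] \st \m{nat}[2]$. Then $\mathcal{R} = \{(\m{nat}[2], \m{nat}[2])\}$, and clause (i) already fails. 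The continuation pairs of the unfolded $\oplus$-type (or, after internal renaming, the corresponding pairs of fresh names) are not in $\mathcal{R}$.

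The repair is to use $\mathcal{R} \cup \mathcal{I}$, where $\mathcal{I}$ is the identity relation on closed valid types.
\begin{itemize}
\item $\mathcal{I}$ is a type simulation on its own. In clauses (vi) and (vii) a closed proposition is either valid or refuted, and when it is valid the continuation type is valid, so its identity pair lies in $\mathcal{I}$.
\item Every $\m{st}_\m{refl}$ instance is an identity pair, because the premise forces $\num{e}[\sigma] = \num{e'}[\sigma]$.
\item Enlarging a relation never invalidates a clause that was already met, so your other cases carry over unchanged.
\end{itemize}
The paper's proof names $\m{st}_\m{refl}$ as a sub-case but never discharges it, so the same repair is needed there too.
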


\begin{proof}
By taking the derivation $\mc{D}$ as above, and constructing the type simulation 
\[\mathcal{R} = \{(A'[\sigma], B'[\sigma])\ \vert\ \langle \mathcal{V'} \semi \mathcal{C'} \semi A' \st B' \rangle \in S(\mathcal{D}) \text{ and } \sigma : \mc{V'} \text { and }\vDash \mc{C'}[\sigma]\}\]
We show in Appendix \ref{appendix-proof} that $\mc{R}$ is indeed a type simulation.
\end{proof}

\vspace{-1em}
\section{Inference Algorithm}\label{sec:inference}

This section presents our rules for constraint generation, which form the theoretical backbone of our type inference algorithm.
The rules themselves are in close correspondence with the standard typechecking rules for our language, included in Appendix \ref{appendix-tc}.
Our primary judgment takes the same form $\mc{V} \semi \mc{C} \semi \Delta \vdash P :: (z: C)$ 
for a program $P$ offering channel $z$ of type $C$, and consuming all channels $(x:A_x) \in \Delta$. 
As with subtyping, we also operate within the context of a list of free arithmetic variables $\mc{V}$ a
nd a governing constraint $\mc{C}$. For comparison, we include below the typechecking rule for $\oplus R$, read from bottom to top:
\begin{mathpar}
\inferrule*[right=tc-$\oplus$R] {
  (k \in L)
  \and
  \mathcal{V} \semi \mathcal{C} \semi \Delta \vdash P :: (x: A_k)
} {
  \mathcal{V} \semi \mathcal{C} \semi \Delta \vdash x.k \semi P :: (x: \ichoice{\ell: A_\ell}_{\ell \in L})
}
\end{mathpar}
Since we are doing inference as opposed to typechecking, we amend this rule to reflect that we are not given $x$'s type structure \emph{a priori}.
We treat it instead as a variable to which we apply a \emph{type constraint}, which conveniently takes the form of the subtyping judgment $\mc{V} \semi \mc{C} \Vdash A \st B$ for types $A$ and $B$ asserting that $A$ is a subtype of $B$. We highlight the introduced intermediate type names and new arithmetic variables in blue, and assume all such names are fresh.
\begin{mathpar}
\inferrule*[right=$\oplus$R] {
  \mc{V} \semi \mc{C} \Vdash \ichoice{k: \textcolor{blue}{B}} \st A
  \and
  \mathcal{V} \semi \mathcal{C} \semi \Delta \vdash P :: (x: \textcolor{blue}{B})
} {
  \mathcal{V} \semi \mathcal{C} \semi \Delta \vdash x.k \semi P :: (x: A)
}
\end{mathpar}
Our resulting $\oplus$R rule, again read bottom-to-top, is a simple structural rule which generates a single subtyping constraint.
Informally, when we encounter the program $\Delta \vdash x.\mathsf{succ} \semi P :: (x: A)$, we know that $A$ must be a $\oplus$-type, and that it must contain the label $\m{succ}$: both pieces of information are captured via the type constraint premise. We would then continue our algorithm on the latter premise with the freshly
introduced intermediate type $B$.
The other structural rules are analogous and omitted for brevity.

The above rule (and all structural rules) only create type constraints.
In addition, some rules yield constraints of another sort: \emph{arithmetic constraints},
which take the form of semantic entailment $\mc{V} \semi \mc{C} \vDash \phi$ for a proposition $\phi$, asserting that $\phi$ holds under $\mc{V}$ and $\mc{C}$.
These rules are generated for the refinement type constructors that are responsible for
exchanging proofs.
In what follows, we highlight a few representative rules. A full list can be found in Appendix \ref{appendix-cgen}.
\begin{mathpar}
\inferrule*[right=?R] {
  \mathcal{V} \semi \mathcal{C} \vDash \phi
  \and
  \mathcal{V} \semi \mathcal{C} \Vdash \texists{\phi}{\textcolor{blue}{A'}} \st A
  \and
  \mathcal{V} \semi \mathcal{C} \semi \Delta \vdash P :: (x: \textcolor{blue}{A'})
} {
  \mathcal{V} \semi \mathcal{C} \semi \Delta \vdash \eassert{x}{\phi} \semi P :: (x: A)
}
\end{mathpar}
The ?R rule is notable for introducing both an arithmetic constraint and a type constraint. The arithmetic constraint mirrors that of the corresponding typechecking rule, and simply states that $\phi$ must hold. Our type constraint, in tandem with the subtyping rules, asserts that $A$ must have the structure $\texists{\psi}{A'}$ for some $\psi$, but $\psi$ need not be identical to $\phi$: we instead must have $\mc{V} \semi \mc{C} \vDash \phi \rightarrow \psi$, i.e. $\phi$ is \emph{stronger} than $\psi$.
\begin{mathpar}
\inferrule*[right=?L] {
  \mathcal{V} \semi \mathcal{C} \Vdash A \st \texists{\phi}{\textcolor{blue}{A'}}
  \and
  \mathcal{V} \semi \mathcal{C} \land \phi \semi \Delta, (x: \textcolor{blue}{A'}) \vdash Q :: (z: C)
} {
  \mathcal{V} \semi \mathcal{C} \semi \Delta, (x: A) \vdash \eassume{x}{\phi} \semi Q :: (z: C)
} 
\end{mathpar}
If the channel of the same type appears in our context $\Delta$, we instead apply the ?L rule. In right rules, we see that the freshly generated type appears to the left of the subtype constraint.
In contrast, for all left rules, the fresh type falls on the right due to the duality of types.
When we offer a channel $(z: C)$, we want that our declared type is \emph{broader}, or a supertype, of whatever our program necessitates— 
e.g. we might declare $(z: \m{nat})$ even if our process only sends the $\m{zero}$ label. 
Conversely, if we consume a channel $(x: A)$, we need $A$ to be \emph{narrower}, or a subtype, of what our program dictates, 
i.e. the program should be able to handle any of $A$'s behavior. 
In this instance where $A=\texists{\psi}{A'}$ for some $\phi$, subtyping dictates that 
$\psi$ should be at least as strong as $\phi$, since if $(x: A)$ "sends" a proof of $\psi$, we consequently receive a proof of $\phi$.
\begin{mathpar}
\inferrule*[right=$\exists$R] {
  \mathcal{V} \semi \mathcal{C} \vDash e \geq 0
  \and
  \mathcal{V} \semi \mathcal{C} \Vdash \exists n.\textcolor{blue}{A'} \st A
  \and
  \mathcal{V} \semi \mathcal{C} \semi \Delta \vdash P :: (x: \textcolor{blue}{A'})
} {
  \mathcal{V} \semi \mathcal{C} \land n=e \semi \Delta \vdash \esend{x}{\{e\}} \semi P :: (x: A)
}
\end{mathpar}
The $\exists$R rule again contains both an arithmetic and a type constraint. The arithmetic constraint dictates that whatever expression we send should be a natural number, i.e., $e \geq 0$ to maintain
the invariant that all witnesses are natural numbers.
Of special note in this particular rule is that, since we are doing inference, we cannot perform a substitution $A'[e/n]$ as in typechecking, since we lack any information about where $n$ might occur in the fresh name $A'$. As a tidy solution, we instead store the substitution as an equivalence in our governing constraint $\mc{C}$, such that any further continuations have access to the value of $n$ without our needing to know ahead of time.
\begin{mathpar}
\inferrule*[right=$\m{def}$] {
  (\num{y_i' : B_i'})_{i \in I} \vdash f[\num{n}\: |\: \phi] = P_f :: (x' : A') \in \Sigma
  \and\\
  \mathcal{V} \semi \mathcal{C} \land \phi[\num{e}/\num{n}] \Vdash A'[\num{e}/\num{n}] \st  \textcolor{blue}{A}
  \and
  (i \in I) \quad \mathcal{V} \semi \mathcal{C} \land \phi[\num{e}/\num{n}] \Vdash B_i \st B_i' [\num{e}/\num{n}]
  \\
  \mathcal{V} \semi \mathcal{C} \semi \Delta, (x:  \textcolor{blue}{A}[\num{e}/\num{n}]) \vdash Q :: (z: C)
} {
  \mathcal{V} \semi \mathcal{C} \semi \Delta, (\num{y_i : B_i})_{i \in I} \vdash \espawn{x}{f[\num{e}]}{\num{y}} \semi Q :: (z: C)
}
\end{mathpar}
Our final example, the $\m{def}$ rule for spawning processes, is noteworthy for interacting with the signature: our declaration of $f$ has $A'$ and $B_i'$ as type variables, which are constrained both by the body $P_f$ of $f$ and here by our spawning of $f$. Our subtyping constraints are deceptively straightforward: for each channel $(y_i: B_i)$ consumed by our spawning of $f$, we compare it with the corresponding channel in the declaration $(y_i': B_i')$ of $f$, and we assert that the "real" type $B_i$ is a subtype of the "expected" type $B_i'$. Conversely, we also constrain the offered type $A'$ of $f$ such that $A'$ is a subtype of our fresh variable $A$. 

In practice, the $\m{def}$ rule is especially important in that it generates the majority of our arithmetic constraints, \emph{despite not generating any as an explicit premise}. To solve type constraints, we apply our subtyping algorithm as presented in Section~\ref{sec:subtyping}, which consequently generates arithmetic constraints. We elaborate further on the relationship between type and arithmetic constraints, as well as the constraint solving process as a whole, in the following section.

We conclude with a statement on the soundness of our inference algorithm:

\begin{theorem}\label{thm:algocorrect}
Let $z \leftarrow f[\mc{V}]\ \num{x_i} = P$ be the definition of a process $f$. 
For fresh type names $A_i$ and $C$ such that $(z: C)$ and $\Delta = (x_i: A_i)_{\forall i}$, 
and if $\mc{V} \semi \top \semi \Delta \vdash P :: (z: C)$, then $P$ is a well-typed process,
i.e. the type assignment for $A_i$ and $C$ produced by inference satisfies typechecking.
\end{theorem}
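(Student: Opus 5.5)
The argument is by induction on the constraint-generation derivation $\mc{V} \semi \top \semi \Delta \vdash P :: (z: C)$. I would prove a generalized statement that keeps track of the constraint context and allows arbitrary type assignments. Let $\rho$ be a solution of the generated constraints, i.e.\ a mapping of every (blue) fresh type name to a type and of every template coefficient to a number, such that every generated type constraint $\mc{V}' \semi \mc{C}' \Vdash A \st B$ becomes derivable by the algorithmic subtyping rules and every arithmetic constraint $\mc{V}' \semi \mc{C}' \vDash \phi$ holds. The generalized claim is: if $\mc{V} \semi \mc{C} \semi \Delta \vdash P :: (z:C)$ is derivable in the inference system and $\rho$ solves all constraints collected in that derivation, then the typechecking judgment $\mc{V} \semi \mc{C} \semi \rho(\Delta) \vdash P :: (z : \rho(C))$ holds. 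We read the typechecking system of Appendix~\ref{appendix-tc} as containing subsumption, or equivalently as checking against any type related by $\st$ to the expected one. The theorem is the special case $\mc{C} = \top$, where $\Delta$ and $C$ consist of the fresh names $A_i$ and $C$.

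The first step is a \emph{subsumption lemma} for typechecking. If $\mc{V} \semi \mc{C} \semi \Delta, (x:A') \vdash P :: (z: C')$ with $\forall \mc{V}.\mc{C} \rightarrow A \st A'$ and $\forall \mc{V}.\mc{C} \rightarrow C' \st C$, then $\mc{V} \semi \mc{C} \semi \Delta, (x:A) \vdash P :: (z:C)$. I would prove this by induction on the typing derivation, unfolding types with $\m{unfold}_\Sigma$. At each step I use the clauses of Definition~\ref{tpsim} to get matching structure and related continuations. In the refinement cases (vi)/(vii), the case $\vDash \neg\phi$ is discharged by the typechecking rule for contradictory constraints, which mirrors $\m{st}_\bot$. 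Theorem~\ref{thm:sound} then turns every solved algorithmic constraint $\mc{V}' \semi \mc{C}' \Vdash A \st B$ into a semantic $\forall \mc{V}'.\mc{C}' \rightarrow A \st B$, so the lemma can be applied.

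The second step is the main induction, by cases on the last inference rule. For $\oplus$R, the induction hypothesis gives $\rho(\Delta) \vdash P :: (x : \rho(B))$. Rule tc-$\oplus$R then gives $x.k \semi P :: (x: \ichoice{k:\rho(B)})$, and the premise $\ichoice{k:B} \st A$, made semantic by Theorem~\ref{thm:sound}, lets the subsumption lemma conclude at type $\rho(A)$. Left rules work symmetrically: $A \st \texists{\phi}{A'}$ is used contravariantly on the context side. The rules ?R, ?L, !R and !L additionally use the solved arithmetic premise, or the strengthened constraint $\mc{C}\land\phi$, exactly as in the corresponding tc rules. For $\m{def}$, the premises $B_i \st B_i'[\num{e}/\num{n}]$ and $A'[\num{e}/\num{n}] \st A$ let the checking rule for spawn use the declared types of $f$. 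The declared types are themselves the $\rho$-images of $f$'s fresh names, and they are validated because the same solution $\rho$ also solves $f$'s own body constraints; this is a mutual induction over all definitions in $\Sigma$.

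I expect the main obstacle to be the $\exists$R and $\forall$L cases. There inference records $n = e$ in the constraint instead of substituting $A'[e/n]$, while typechecking continues at $A'[e/n]$ under $\mc{C}$. To bridge this I would first prove a lemma that typing under $\mc{V}, n \semi \mc{C} \land n = e$ is equivalent to typing under $\mc{V} \semi \mc{C}$ with $e$ substituted for $n$. This follows from the substitution property of typing together with the fact that Definition~\ref{paper:forallvstc} only quantifies over ground $\sigma$ satisfying $\mc{C}$, so every such $\sigma$ already sets $n = e$. A secondary concern is that Theorem~\ref{thm:sound} is stated for empty $\Gamma$, while constraints produced mid-derivation are solved by starting fresh derivations. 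That is harmless, since each solved constraint is checked from $\Gamma = \cdot$.
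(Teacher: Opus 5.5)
The paper contains no proof of Theorem~\ref{thm:algocorrect}. It is only stated at the end of Section~\ref{sec:inference}, Appendix~\ref{appendix-proof} proves only Theorem~\ref{thm:sound}, and the implementation instead re-runs the Rast checker on the inferred declarations. So there is nothing to compare routes against, and your proposal has to be judged on its own.

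Your architecture is the right one. It combines a global solution $\rho$, induction on the constraint-generation derivation, Theorem~\ref{thm:sound} applied to each solved constraint from $\Gamma = \cdot$, and a substitution argument for $\exists$R/$\forall$L. In that last argument, the premise $\mc{V} \semi \mc{C} \vDash e \geq 0$ is exactly what lets you extend each ground $\sigma$ by $n \mapsto e[\sigma]$.

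Reading subsumption into the checker is forced, not optional. With tc-id taken literally, the statement already fails for a bare forward $\fwd{y}{x}$: its constraint $A_0 \st A_1$ admits $A_0 = \m{even}$, $A_1 = \m{nat}$, and linlam needs $\m{val} \st \m{exp}$. Likewise, your notion of solution must allow unfolding one side. Constraints such as $\ichoice{k : B} \st A$ with $\rho(A) = \m{nat}[e]$ compare a structure with a name, and no rule of Appendix~\ref{appendix-st} except $\m{st}_\bot$ relates those.

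Your two readings of the checker are not interchangeable, however, and your lemma conflates them.

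\emph{First reading: a semantic subsumption rule.} Your subsumption lemma \emph{is} that rule and needs no induction. Each structural case applies the tc rule to the structural type the inference rule itself generated, then subsumes. In ?L, for instance, tc-?L applies to $x : \texists{\phi}{\rho(A')}$ with the induction hypothesis verbatim, and subsumption along $\rho(A) \st \texists{\phi}{\rho(A')}$ finishes. No contradiction rule or constraint strengthening ever arises.

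\emph{Second reading: subtyping only at forward and spawn, as in Rast.} Your induction bottoms out at those leaves by composing the old subtyping with the new one, so you need transitivity of semantic subtyping. The largest type simulation composed with itself is again one. Clause (iv) is why two arbitrary simulations cannot simply be composed, and (vi)/(vii) go through because the middle formula is forced true. The subtyping must also be the semantic one, since the algorithmic relation is incomplete and need not be transitive.

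Under the second reading, ?L/!R additionally need two things. You need constraint strengthening: the continuation was typed under $\mc{C} \land \phi$ but is needed under $\mc{C} \land \psi$, where $\mc{C} \land \psi \vDash \phi$. You also need re-annotation of the assumed or asserted formula. The per-$\sigma$ vacuous disjunct of (vi)/(vii) is absorbed by the strengthened constraint. A contradiction rule, which Appendix~\ref{appendix-tc} lacks, is needed only when the current constraint is unsatisfiable outright.

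One step is invalid as written. In $\m{def}$, the premises $A'[\num{e}/\num{n}] \st A$ and $B_i \st B_i'[\num{e}/\num{n}]$ are generated under $\mc{C} \land \phi[\num{e}/\num{n}]$, not under $\mc{C}$. No premise $\mc{V} \semi \mc{C} \vDash \phi[\num{e}/\num{n}]$ is generated, yet your subsumption at the spawn needs these subtypings under $\mc{C}$. If $\mc{C} \land \phi[\num{e}/\num{n}]$ is unsatisfiable, all subtyping premises hold by $\m{st}_\bot$, and $\rho(A)$ may be unrelated to the declared output type. The case goes through only because the placeholder declarations introduced for inference carry no constraint, i.e.\ $\phi = \top$; you have to state and use this.

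Finally, your mutual induction over $\Sigma$ is unnecessary, and would not be well-founded for recursive processes. tc-def consults only declarations, so each body is checked independently against the declarations fixed by the single global $\rho$.
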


\vspace{-1em}
\section{Implementation}\label{sec:impl}

We implemented the type inference algorithm on top of the Rast implementation~\cite{Das20FSCD},
which already supports lexing, parsing, and typechecking for session types with arithmetic refinements.
The inference engine consists of 2,069 lines of SML code, and integrates with the
Z3 SMT solver~\cite{Moura08Z3} to model arithmetic constraints. 
Our general approach begins by introducing placeholder declarations for each process in the file;
for every definition \lst{proc x <- f[$\mc{V}$] x1 x2 ... xn}, we introduce the declaration
\begin{lstlisting}[language=prest]
	decl f[$\mc{V}$] : (x1 : $A_1[\overline{e_1}]$) (x2 : $A_2[\overline{e_2}]$) ... (xn : $A_n[\overline{e_n}]$) |- (x : $A[\overline{e}]$)
\end{lstlisting}
where each type $A_1, A_2, \ldots, A_n, A$ is a fresh type variable refined with
arithmetic expressions $e_i(\mc{V})$ over the free variables $\mc{V}$ of the process.
Our goal is to find a \emph{type assignment} which maps each type variable to a concrete type name
declared in the program,
as well as an \emph{expression assignment} which maps each $e_i$ to a symbolic arithmetic expression.
Theorem~\ref{thm:algocorrect} dictates that these assignments result in a well-typed program,
which is validated by running the Rast type checker on the inferred types.
We discuss how this inference is carried out in two stages, the practical challenges we
faced, and the optimizations we introduced to address them.

\paragraph*{\textbf{Two-Stage Inference}}
When inferring the type of a process, we form constraints of two different sorts:
type constraints and arithmetic constraints.
Our constraint generation rules on processes create both sorts, and our subtyping algorithm,
which we use to solve type constraints, may yield additional arithmetic constraints
that must also be satisfied for inference to succeed.

At first, it might be tempting to solve both sorts of contraints in a unified pass.
However, this becomes infeasible in practice.
To see why, first note that arithmetic constraints cannot be solved \emph{eagerly} due to
the presence of process spawns.
Consider a process \lst{foo} that calls the \lst{two} process twice, assigning the
offered channel to \lst{y} and \lst{z} respectively.
\begin{lstlisting}[language=prest,mathescape]
decl foo: . |- (x: $\rt{A}{e()}$)
proc x <- foo = y <- two; z <- two; ...
\end{lstlisting}
Both calls to \lst{two} in the \lst{foo} process will generate constraints on type $\rt{A}{e()}$,
which in turn will generate multiple distinct arithmetic constraints on $e$.
Since $e$ must be the same in \emph{both} constraints, it is insufficient to demonstrate
the satisfiability of these two constraints independently; a solver might otherwise invalidly
assign different values to $e$.
Thus, we must first collect all arithmetic constraints and then solve them all in one large batch.

However, taking this approach and otherwise following the rules as written leads to
remarkable blowup in the number of constraints we generate and expensive solver calls we must make.
The primary reason for this blowup is the presence of the $\m{st_\bot}$ rule,
which dictates that arbitrary, even mismatched, types can be subtypes
under constraint $\mc{C}$ as long as $\mc{C}$ is false.
Thus, whenever we might be tempted to return false when subtyping on a structural mismatch,
such as $\mc{V} \semi \mc{C} \vDash A \st B$, we must instead add to our arithmetic constraint
set the constraint $\mc{V} \semi \mc{C} \vDash \bot$.
Thus, in this na\"{i}ve approach, we are not able to rule out \emph{any} type assignments
without making a solver call, which becomes prohibitively expensive due to the exponential
number of possible assignments.

We therefore chose to restrict the programmer from writing any dead code; that is,
code where $\mc{V} \semi \mc{C} \vDash \bot$.
This restriction has a number of consequences, but the essential one is that we can now
rule out type assignments on structure without checking whether $\mc{C}$ is satisfiable.
Taking advantage of this restriction, we split type inference into two stages:
the first stage operates solely upon type structures, where all refinements are stripped
from the program, and yields a list of structurally viable type assignments;
the second stage reconsiders the refinements for a specific type assignment, collecting the resulting arithmetic constraints and passing them into the Z3 SMT solver.
This two-stage approach has the additional benefit of drawing a clean boundary
between the decidable and undecidable fragments of our algorithm:
in particular, type inference \emph{without} refinements is decidable,
and so we can return determinate error messages upon encountering a type mismatch.

\vspace{-0.4em}
\subsection{First Stage of Inference}
The first stage collects the constraints generated during type checking and subtyping,
strips off the refinements, and proceeds to compute a type assignment.
Since this stage is actually decidable, our algorithm is guaranteed to terminate and
either find a satisfying type assignment, or report a type error.
Thus, if the program is ill-typed due to structural typing (i.e., without refinements),
then it is guaranteed to be detected.

\paragraph*{\textbf{Transitivity}}

As described, our type constraint generation rules produce a fresh ``intermediate'' type variable for every process expression we encounter. 
For instance, consider the $\mt{double}$ proces which doubles the value of a given natural number. 
Our rules will generate the following constraints, denoted with comments on the line that induces the constraint:
\begin{lstlisting}[language=prest]
decl double: (x: A0) |- (y: A1)
proc y <- double x = case x (            % A0 <: +{zero: I0, succ: I1}
    zero => y.zero;                     % +{zero: I2} <: A1
             y <-> x                      % I0 <: I2
  | succ => y.succ;                     % +{succ: I3} <: A1
             y.succ;                     % +{succ: I4} <: I3
             y <- double x )             % I1 <: A0; A1 <: I4
\end{lstlisting}
We generate fresh intermediate type variables (denoted by $I_n$) as a placeholder for
the continuation type.
If we do not distinguish between these intermediate variables and the ``signature'' variables (e.g., $A$)
which we introduce in our declarations, we will generate a large number of constraints, putting a heavy
load on our inference engine.
To make matters worse, when these types have index refinements, each such intermediate variable will also
create arithmetic constraints, which will make constraint solving infeasible for Z3.
This imposes a serious limitation on the scalability of inference.

We address this limitation with a crucial observation:
\emph{if $A \st B$, then so is $\ichoice{k : A} \st \ichoice{k : B}$}.
To see this in action in the above example, note the third constraint:
$I_0 \st I_2$ from which we deduce that $\ichoice{\mb{zero} : I_0} \st \ichoice{\mb{zero} : I_2}$.
Since our second constraint asserts that $\ichoice{\mb{zero} : I_2} \st A_1$,
we transitively conclude that $\ichoice{\mb{zero} : I_0} \st A_1$,
and we have thus eliminated $A_2$ from our constraints.
Following this process, we can reduce the above set of seven constraints
to just two: $A_0 <: \ichoice{\mb{zero}: I_0, \mb{succ}: A_0}$ 
and $\ichoice{\mb{zero}: I_0, \mb{succ}: \ichoice{\mb{succ}: A_1}} <: A_1$.
Here our two constraints correspond to the most general types of $A_0$ and $A_1$.

This observation leads to the most significant optimization in our inference engine,
which we call \emph{transitivity}. It partly relies on transitivity of subtyping,
the property that if $A <: B$ and $B <: C$, then $A <: C$.
Before we generate a type assignment, we perform a transitivity pass, which eliminates
intermediate variables whenever possible.
This dramatically reduces the number of type and arithmetic constraints, leading
to significant speedups in inference, which we evaluate in Section~\ref{sec:eval}.
As in the example, not all intermediate variables will be removed by transitivity though---specifically,
forwarding between channels which have both previously been communicated on
will result in a type constraint of the form $\mc{V} \semi \mc{C} \vDash I_0 \st I_1$.
In such cases, we choose $I_1$ to perform any appropriate substitutions by transitivity,
and we leave $I_0$ "free", that is, unconstrained—$I_0$ will only appear in the
continuation type of another constraint.

\vspace{-0.5em}
\paragraph*{\textbf{Type Constraint Solution}} Once the intermediate variables are eliminated, we assign, to each remaining type variable,
an initial \emph{search space} consisting of each defined type name in the file.
Then, for each type variable, we substitute the type variable with one type from its search space,
and run a \emph{partial} version of subtyping on any relevant constraints.
This modified subtyping algorithm stops eagerly upon encountering two (non-identical) type names,
and returns either a reduced constraint or simply true or false.
Failed candidates are removed from the search space while successful ones remain,
and might get pruned away due to refinements.

Continuing the previous example, suppose we have the following type definitions in our signature.

\begin{lstlisting}[language=prest]
type nat = +{zero: one, succ: nat}
type odd = +{succ: even}
type even = +{zero: one, succ: odd}
\end{lstlisting}

Recall our post-transitivity constraints $A_0 <: \ichoice{\mb{zero}: I_0, \mb{succ}: A_0}$ 
and $\ichoice{\mb{zero}: I_0, \mb{succ}: \ichoice{\mb{succ}: A_1}} <: A_1$.
Type constraint solution would determine that $A_0$ can be any one of these,
but that $A_1$ can only be either $\m{nat}$ or $\m{even}$, since substituting $\m{odd}$ for $A_1$
violates the second constraint by the $\m{st}_{\oplus}$ rule. 
Note that we can only eliminate $\m{odd}$ here because we assume there is no dead code:
otherwise, there might be some refinements that make the line $y.\mathsf{zero}$ unreachable,
letting us satisfy the constraint by applying $\m{st}_{\bot}$ instead of $\m{st}_{\oplus}$.


\vspace{-0.5em}
\paragraph*{\textbf{Program Reconstruction}}
The refinement-agnostic first stage also allows us to \emph{reconstruct} each process definition
according to its assigned type, adding back in assertions and assumptions.
We take the approach of \emph{forcing calculus}~\cite{dasVerified2020}, eagerly assuming whenever
possible and lazily asserting when absolutely necessary, guaranteeing that any assertions
``see'' all possible assumptions.
Reconstruction also means that the programmer no longer has to manually write assertions or assumptions—
a labor-intensive process which significantly bloats the source code, offsetting many potential
practical benefits of type inference.
We also reconstruct impossible branches, in line with our restriction that the programmer cannot
themselves write unreachable code.
Specifically, if a label in the type of a label-set is absent from a case statement,
we manually insert it and say that it is impossible.
This approach, in tandem with the two-stage solution,
constitutes our solution to the aforementioned impossible-blowup problem:
in the type solution stage, we assume that $\mc{C}$ is never false,
which permits us to eliminate possible type assignments on structure.

To conclude the example, we choose an arbitrary permutation of valid type assignments $A_0=\m{nat} \semi A_1=\m{even}$.
Then, supposing $\m{nat}$ employs the standard refinement, and $\m{odd}$ and $\m{even}$ follow suit,
we reconstruct the program refinements to conclude the first stage:

\begin{lstlisting}[language=prest]
decl double[n]: (x: nat[e0(n)]) |- (y: even[e1(n)])
proc y <- double x = case x (
    zero => assume x {e0(n) = 0};
        y.zero; assert y {e1(n) = 0};
        y <-> x
  | succ => assume x {e0(n) > 0};
        y.succ; assert y {e1(n) > 0};
        y.succ; assert y {e1(n)-1 > 0};
        y <- double x )
\end{lstlisting}






\vspace{-0.4em}
\subsection{Second Stage of Inference}



In the second stage, we collect the arithmetic constraints and attempt to find a satisfying assignment.
Since our first stage has already provided us with structurally sound type assignments,
we generally skip over structural constructs which are not refinement-related.
However, we pay extra attention to the judgmental constructs $\m{forward}$ and $\m{spawn}$,
since they imply subtyping relations between their operands.
For these, we run the complete subtyping algorithm and collect any resulting constraints.
This generation substage yields a list of arithmetic constraints which are shipped off to Z3~\cite{Moura08Z3},
which, if they are satisfiable, returns values which we can parse into an expression assignment.
If Z3 fails to find a solution, i.e., either because there is none, or the problem is undecidable,
or due to a timeout, we say there is no solution and move on to the next type assignment if there
exists one.
Note that this stage actually attempts to solve an undecidable problem, hence our algorithm
is incomplete.
However, because of our numerous optimizations and heuristics, it works remarkably well
in practice.

\paragraph*{\textbf{Polynomial Templates}}

The bulk of our implementation effort went towards coercing Z3 into cooperating with our style
of constraints and scaling it to a wide variety of challenging benchmarks.
We translate our arithmetic constraints of the form $\mc{V} \semi \mc{C} \vDash \phi$ into the logical formulas $\forall \mc{V}.\, \mc{C} {\implies} \phi$, where $\mc{C}$ and $\phi$
include expression variables $e_i(\mc{V})$ which we want to solve for.
Our initial approach was to treat $e_i$'s as \emph{uninterpreted functions} and rely solely upon Z3's
built-in uninterpreted function solver.
Unfortunately, this approach turned out to be insufficient for even the most trivial examples: even when Z3 did not time out, it would return a function interpretation which was piecewise or otherwise inexpressible in our language for arithmetic expressions.

Our next optimization is based on the observation that we only allow refinement expressions that are
\emph{polynomials} over the quantified variables.
Taking advantage of this, we represent each expression as a multivariate polynomial
of degree $d$, and solve for all introduced coefficients.
For instance, if $d=2$ and we have free variables $m, n$ for an expression $e$, we say
$e(m,n)=c_0+c_1m+c_2n+c_3m^2+c_4mn+c_5n^2$.
By providing this template to the solver, we greatly restrict the number of possible
interpretations which Z3 must consider relative to purely uninterpreted functions,
thereby reducing the solver burden significantly.
Although we can support non-linear refinements through this approach in principle,
we frequently encountered Z3 timeouts when solving non-linear constraints,
and so we chose to restrict the degree to $d=1$ in practice.


\paragraph*{\textbf{Real Arithmetic}}

Our refinements only allow natural numbers, and do not support real values in arithmetic expressions.
However, we found that Z3 stalls less frequently if we solve within a real-valued logic,
as opposed to an integer-valued one.
Of course, with this approach we risk Z3 returning non-integer values when modeling our coefficients,
which must ultimately be integers, but this occurrence turns out to be much less frequent than expected. 
Oftentimes, Z3 simply returns exact integer values, even when the same constraints would time out for
an integer logic. If Z3 returns a floating-point value, we fall back to integer logic.

\vspace{-1em}
\section{Evaluation}\label{sec:eval}

\paragraph*{\textbf{Methodology}}

We evaluate the performance and efficacy of our inference algorithm on a variety of
challenging benchmarks, that are known to be well-typed.
Experiments were performed on a 2021 MacBook Pro with 16GB of RAM and an 8-core M1 Pro CPU.
For each benchmark, we present the execution time of both stages: type constraint solving (Stage 1)
and arithmetic constraint solving (Stage 2).
We evaluate the efficacy of our three key optimizations as follows:
\begin{itemize}
	\item \emph{Polynomial Templates}: We run our inference engine with two strategies:
	\textbf{uif} stands for uninterpreted functions, while \textbf{poly} stands for
	using polynomial templates.

	\item \emph{Real Arithmetic}: Our calls to Z3 also have two modes:
	\textbf{real} stands for real arithmetic, while \textbf{int} stands for integers.

	\item \emph{Transitivity}: We enable/disable transitivity which eliminates the
	intermediate type variables, so that we do not need to find a satisfying type
	assignment for them.
\end{itemize}
Our results are summarized in Table~\ref{table-results}.
For all experiments, each call to Z3 was set to timeout after 10 seconds, and each trial,
which might make multiple such calls, was run for a maximum of 60 seconds in total.
All numerical results are averaged across 10 trials, rounded to the nearest hundredth of a millisecond.

Our algorithm yields four possible results:
\begin{itemize}
	\item \textbf{success}: inference finds a valid type and arithmetic assignment,
	\item \textbf{timeout}: either a call to Z3 for a particular type assignment fails to
	return within the specified 10-second
	limit, or the overall algorithm execution time exceeds the 60-second limit,
	\item \textbf{inexpressible}: when Z3, under \textbf{uif} mode, returns an expression assignment
	that cannot be expressed by our arithmetic language, e.g., expressions contain if-then-else constructs;
	or in the real arithmetic case, a floating-point number is returned that cannot be converted
	into an integer.
\end{itemize}
Technically, there is a fourth possibility as well: if the program is ill-typed,
our inference algorithm returns \textbf{unsat}.
However, for our evaluation, we chose to focus on well-typed benchmarks.
Additionally, we elide from Table~\ref{table-results} those experiments which resulted in a timeout.

\vspace{-0.4em}
\subsection{Results}

For each benchmark, we briefly explain its contents, present the experimental results,
and discuss any notable findings.

\paragraph*{\textbf{Unary Natural Numbers} (nat-u)}
This benchmark primarily contains the refined natural number based on the types
\begin{align*}
	& \mi{type}\ \m{nat}[n] = \ichoice{\mb{zero}: \texists{n=0}{\one},\ \mb{succ}: \texists{n>0}{\m{nat}[n-1]}} \\
	& \mi{type}\ \m{natpair}[m][n] = \m{nat}[m] \tensor \m{nat}[n] \tensor \one
\end{align*}

\noindent The $\m{natpair}$ type is necessary for duplicating numbers due to linearity restrictions.
The module includes four main processes and three helper and test processes:
\textbf{add} for adding two numbers,
\textbf{clone} for copying a number,
\textbf{consume} that converts a $\m{nat}$ into $\one$,
and \textbf{double} which doubles a number via clone and add.

First, we note that inference succeeds only when we use both transitivity
and polynomial templates.
Real arithmetic leads to a slowdown compared to integer arithmetic in Stage 2,
with no significant difference in Stage 1.
As was expected, Stage 2 takes 2-3 orders of magnitude time more than Stage 1,
since it involves calls to Z3.
The types inferred in each case were as expected, for example
\begin{lstlisting}[language=prest]
	decl add[m][n] : (x : nat[m]) (y : nat[n]) |- (z : nat[m+n])
	decl double[n] : (x : nat[n]) |- (y : nat[2*n])
\end{lstlisting}

\begin{table}[t]
\setlength{\tabcolsep}{8pt}
\begin{adjustbox}{width=\textwidth}
\begin{tabular}{c | c | c | c | c | r | r}
\textbf{Benchmark} & \textbf{Strategy} & \textbf{Arithmetic} & \textbf{Transitivity} & \textbf{Result} & \textbf{Stage 1} & \textbf{Stage 2} \\\hline
nat-u & poly & real & true & success & 0.54 & 613.64\\
nat-u & poly & int & true & success & 0.56 & 171.04\\\hline
nat-d & poly & real & true & success & 0.52 & 637.88\\
nat-d & poly & real & false & success & 17.07 & 665.02\\
nat-d & uif & real & true & inexpressible & 0.47 & 214.54\\
nat-d & uif & real & false & inexpressible & 16.88 & 249.26\\
nat-d & uif & int & true & inexpressible & 0.49 & 210.77\\
nat-d & uif & int & false & inexpressible & 16.87 & 246.57\\\hline
bin & poly & real & true & success & 1.26 & 255.89\\\hline
list & poly & real & true & success & 0.84 & 465.14\\
list & poly & int & true & success & 0.85 & 228.47\\\hline
linlam & poly & real & true & success & 1.39 & 11.00\\
linlam & poly & int & true & success & 1.39 & 10.56\\
linlam & uif & real & true & success & 1.40 & 9.84\\
linlam & uif & int & true & success & 1.39 & 9.63\\\hline
linlam-size & poly & real & true & success & 1.75 & 148.33\\
linlam-size & poly & int & true & success & 1.65 & 1091.58\\
\end{tabular}
\end{adjustbox}
\caption{Evaluation results.}
\vspace{-2.5em}
\label{table-results}
\end{table}

\paragraph*{\textbf{Direct Natural Numbers} (nat-d)}
An alternative representation for natural numbers is directly through refinements,
via the type $\m{nat} = \exists n. \one$.
Instead of sending `n' $\mb{succ}$ messages, this type produces a single natural number
as a refinement and terminates.
This benchmark contains all the same programs as unary naturals, but with this modified
representation.

Successful satisfying assignments only occur for the specific combination of polynomial templates
along with real arithmetic, regardless of transitivity.
Transitivity leads to a speedup of 1-2 orders of magnitude in Stage 1.
With uninterpreted functions, inference does find a solution but heavily relies upon the aforementioned
if-then-else constructs to effectively case-analyze the constraints instead of finding linear solutions.
For instance, if we use $\m{add}[m][n]$ only as $\m{add}[2][3]$ and $\m{add}[4][5]$, then instead of
outputting the expected expression $m+n$, Z3 outputs $\eif{m=2}{5}{9}$.

\paragraph*{\textbf{Binary Numbers} (bin)}
A more efficient representation of natural numbers is in their binary form.
Instead of sending $n$ $\mb{succ}$ messages, a type can send $\lceil \log_2 n \rceil$
$0$'s and $1$'s.
This representation is captured in this benchmark which introduces a type $\m{bin}[n]$
which can send labels $b0$, $b1$, or $e$, representing $0$, $1$, and termination, respectively.
We encode them in \emph{little endian} format, i.e. the least significant bit is sent first,
which makes implementations more convenient and types more intuitive.
The refinement indexes their value.
\begin{align*}
\mi{type}\ \m{bin}[n] = \ichoice{
	&\mb{b0}: \texists{n>0}{\exists k.\texists{n=2k}{\m{bin}[k]}},\\
	&\mb{b1}: \texists{n>0}{\exists k.\texists{n=2k+1}{\m{bin}[k]}},\\
	&\mb{e}: \texists{n=0}{\one}}
\end{align*}

\noindent The type of $\m{bin}[n]$ signifies that the provider can either send labels $\mb{b0}$, $\mb{b1}$, or $\mb{e}$.
In the case of $\mb{b0}$, the provider provides a proof that indeed $n > 0$ and produces a new number
$k$ such that $n = 2k$, meaning that $n$ is even.
Analogously for $\mb{b1}$, the provider asserts that $n > 0$ and is odd by producing $k$ such that
$n = 2k+1$.
Lastly, in the case of $\mb{e}$, the provider proves that $n = 0$ and terminates.
For this benchmark, we implement \textbf{successor} and \textbf{double}
with a few helper processes.

Due to the complexity of the $\m{bin}$ type,
this benchmark turns out to be the most challenging for our algorithm.
Inference only succeeds when all three optimizations are enabled,
thus demonstrating the efficacy of our optimizations in tandem with each other.
The types inferred are as expected and successfully typecheck:
\begin{lstlisting}[language=prest]
	decl successor[n] : (x : bin[n]) |- (y : bin[n+1])
	decl double[n] : (x : bin[n]) |- (y : bin[2*n])
\end{lstlisting}

\paragraph*{\textbf{Lists} (list)}

This is another practically important data structure, particularly
in the context of session types to store data.
Since we currently do not support inference of polymorphic session types,
we use Lisp-style lists of natural numbers, where the list is refined by its length.
The type $\m{list}[n]$ allows either sending the $\mb{cons}$ label if $n > 0$,
transitioning to $\m{list}[n-1]$, or the $\mb{nil}$ label if $n = 0$.
\begin{align*}
\mi{type}\ \m{list}[n] = \ichoice{
	&\mb{cons}: \texists{n>0}{\,\m{nat} \tensor \m{list}[n-1]},\\
	&\mb{nil}: \texists{n=0}{\one}}
\end{align*}

\noindent To focus on inference of lists, we use unrefined natural numbers ($\m{nat}$)
as the type of elements stored inside the list.
We implement 2 important (with some helper) list processes: \textbf{append}, which concatenates two lists,
and \textbf{split} which splits a list in half using two mutually recursive processes that
operate on even-length and odd-length lists.
We also include a $\m{listpair}[m][n]$ type, analogous to the aforementioned $\m{natpair}$ type,
to realize \textbf{split}.

We successfully find a solution only when
using transitivity and polynomial templates.
Compared to unary nats, Stage 1 takes more time while Stage 2 takes less.
We again observe a speedup from integer arithmetic as opposed to real arithmetic,
which lends itself to the hypothesis that integer arithmetic 
tends to find soltuions faster,
although real arithmetic is more useful for identifying solutions more consistently.
As expected, the types inferred are as follows:
\begin{lstlisting}[language=prest]
	decl append[m][n] : (x : list[m]) (y : list[n]) |- (z : list[m+n])
	decl split_even[n] : (x : list[2*n]) |- (y : listpair[n][n])
	decl split_odd[n] : (x : list[2*n+1]) |- (y : listpair[n+1][n])
\end{lstlisting}

\paragraph*{\textbf{Linear $\lambda$-calculus} (linlam)}
Our most complex benchmark is an implementation of a linear $\lambda$-calculus of expressions
on top of session types.
Due to its complexity, we present this example via 2 benchmarks: the first one does not contain
any refinements while the second one indexes types with their size.
Expressions are represented using the following type:
\[
	\mi{type} \; \m{exp} = \ichoice{\mb{lam}: \m{exp} \lolli \m{exp},\ \mb{app}: \m{exp} \tensor \m{exp}}
\]

\noindent Subtyping plays an important role here because the type of values written as
\[
	\mi{type} \; \m{val} = \ichoice{\mb{lam}: \m{exp} \lolli \m{exp}}
\]

\noindent is a \emph{subtype} of $\m{exp}$, i.e., the type of expressions: all values are also expressions.
We implement processes to evaluate and normalize expressions,
and we include a couple of simple programs ($\m{id}$, $\m{swap}$) in the calculus.

Immediately evident from the results is that transitivity here makes a significant difference:
only those trials with transitivity enabled succeed, while the others time out.
This is because of the presence of both $\m{val}$ and $\m{exp}$ types.
In previous examples, type assignments could immediately rule out a large swath of the
search space because there was only one type that would structurally fit a type variable,
but here $\m{exp}$ and $\m{val}$ are at times interchangeable.
Stage 1 involves much more work, which is exacerbated by the presence of intermediate type variables
if we forego transitivity.
Of course, since we lack refinements, Z3 has no constraints to solve, and so the
\textbf{strategy} and \textbf{arithmetic} optimizations, which only relate to Z3, have little observable effect.
Since Stage 2 concerns arithmetic refinements, of which there are none in this benchmark, the calls
to Z3 finish in $\sim$10ms.

Due to subtyping, there are multiple possible combinations of types that our algorithm could correctly infer. 
Our particular implementation infers the following key declarations, prioritizing the narrower type $\m{val}$ over its supertype $\m{exp}$:
\begin{lstlisting}[language=prest]
	decl apply : (e1 : exp) (e2 : exp) |- (e : exp)
	decl eval : (e : exp) |- (v : val)
	decl norm : (e : exp) |- (n : exp)
\end{lstlisting}

\paragraph*{\textbf{Sized Linear $\lambda$-calculus} (linlam-size)}
Finally, our most complex benchmark is extended with a refinement representing
the size of the term, as follows:
\begin{align*}
\m{type}\ \m{exp}[n] = \ichoice{
	&\mb{lam}: \texists{n>0}{\forall n'. \m{exp}[n'] \lolli \m{exp}[n + n' - 1]},\\
	&\mb{app}: \forall n_1 . \forall n_2 . \texists{n=n_1+n_2+1}{\m{exp}[n_1] \tensor \m{exp}[n_2]}
	}
\end{align*}

\noindent We adapt the aforementioned subtype $\m{val}$ to $\m{val}[n]$.
Finally, we introduce a new type, $\m{boundedVal}[n]=\exists k . \texists{k \leq n}{\m{val}[k]}$,
as the new return type of our $\m{eval}$ process:
although we cannot evaluate the exact size of an evaluated term, we can place an upper bound on it.

Like binary numbers,
this example reflects the full power of our optimizations.
Since this benchmark extends the unrefined linear $\lambda$-calculus,
it makes sense that disabling transitivity always leads to a timeout in Stage 1.
Since we also have arithmetic constraints to solve for this benchmark, our polynomial template
strategy proves crucial in preventing a Z3 timeout.
We successfully find solutions for both integer and real arithmetic with both other optimizations,
but integer arithmetic takes an order of magnitude longer to solve.

Our implementation returns the following key declarations:
\begin{lstlisting}[language=prest]
	decl apply[m][n] : (e1 : exp[m]) (e2 : exp[n]) |- (e : exp[m+n+1])
	decl eval : (e : exp[n]) |- (v : boundedVal[n])
\end{lstlisting}
Noteworthy here is that, via the inferred type of \textbf{eval}, our algorithm deduces
that the evaluation of an expression with size $n$ yields a value whose size is no greater than $n$.

\vspace{-1em}
\section{Related Works}\label{sec:related}
\vspace{-0.4em}

\paragraph*{\textbf{Techniques for Inferring Type Refinements}}
Numerous techniques have been proposed for inferring refinement types for
functional programs.
The earliest works on refinement type inference was in the context
of ML~\cite{pfenningRefinementTypes} where the types were refined
by a \emph{finite} programmer-specified lattice.
Finiteness was crucial for decidability of type inference, and type refinement
was performed iteratively until reaching a fixed point.
Our type system is considerably more expressive with a possibly infinite
set of arithmetic refinements, which makes inference undecidable.
Nonetheless, they introduced this technique of inference in stages:
first inferring base types followed by inferring type refinements.
This style was later adopted by Liquid Types~\cite{Rondon08PLDI}
where inference was reduced to Hindley-Milner type inference
for base types~\cite{Damas82POPL}, followed by liquid constraint generation
and constraint solving.
Type inference in this setting is decidable because they adopt a conservative
but decidable notion of subtyping, where subtyping of arbitrary
dependent types is reduced to a set of implication checks over base types.
In contrast, our notion of subtyping is general and therefore, undecidable.

Inference of type refinements has also been carried out using abstract
interpretation~\cite{Jhala11CAV} by reducing it to computing
invariants of simple, first-order imperative programs.
The \textsc{Fusion} algorithm~\cite{Cosman17ICFP} reduces inference
to finding a satisfying assignment for (nested) Horn Clause Constraints
in Negation Normal Form (NNF).
\citet{Hashimoto15SAS} reduce inference to a type optimization problem
which is further reduced to a Horn constraint optimization problem.
This technique can infer maximally preferred refinement types based on a user-specified
preference.
\citet{Pavlinovic21POPL} propose a more general type system
that is parametric both with the choice of the abstract refinement domain 
and context-sensitivity of control flow information.
More recently, learning-based techniques using randomized testing~\cite{Zhu15ICFP}
and LLMs~\cite{Sakkas25ICSE} have also been proposed for refinement inference.

The most distinguishing aspect of our style of refinements is that
session types follow \emph{structural} typing, whereas the aforementioned
systems follow \emph{nominal} typing.
With nominal typing where type equality and subtyping for base types depends
on the name, inferring the base type for all intermediate sub-expressions becomes
decidable and reduces to standard Hindley-Milner inference.
On the other hand, prior work~\cite{dasRefinements2020} shows that even though
linear arithmetic as well as type equality (and therefore subtyping) for basic
session types are decidable, the combination makes typing undecidable.
More concretely, there is no notion of a ``base type'' with session types and
therefore, our subtyping rules cannot be separated into subtyping for base
types and refinements; they mix together naturally.
As a result, even our first stage of inference differs from these works and eliminates
the intermediate types instead of inferring them.
More closely related to our work is the recent work on structural type
refinements~\cite{Binder22TYDE} which builds on algebraic subtyping to
combine properties of nominal and structural type systems.
The main difference between the two works is the design of the refinement layer.
While we refine types with arithmetic expressions, they use polymorphic variants,
thus supporting type application and union.
Due to this, we rely on a solver for inference while
they use constraint-based type inference.

\vspace{-0.5em}
\paragraph*{\textbf{Session Type Inference}}
Also related to our system are techniques for session type inference which
originated from observing communication primitives in
$\pi$-calculus~\cite{Graversen16WSFM,Pucella08Haskell}
implemented in OCaml~\cite{Imai22TACAS} and Haskell~\cite{Imai11PLACES}.
Session types have also been inferred with control flow information~\cite{Collingbourne10ENTCS}
and in a calculus of services and sessions~\cite{Mezzina08COORDINATION}.
\citet{Almeida25PLACES} propose an algorithm for inference of FreeST~\cite{Almeida19PLACES}
which is an implementation of context-free session types~\cite{Thiemann16ICFP}.
This technique builds on Quick Look~\cite{Serrano20ICFP} to enable inference of type annotations
in polymorphic applications.
Although FreeST allows non-regular protocols, arithmetic refinements are more general and can encode
stronger properties based on sizes and values that can even be non-linear.
Therefore, the inference algorithms are also quite different and FreeST does not require
solving arithmetic constraints.
To that end, none of these works support the DML-style~\cite{Xi99popl} of refinements like our system.

\vspace{-0.5em}
\paragraph*{\textbf{Session Subtyping}}
The concept of subtyping for session types has its roots in seminal work by \citet{gaySubtypingSessionTypes2005}, 
which introduces both the formal notion of a type simulation and a practical subtyping algorithm for basic session types in the $\pi$-calculus. 
Our work builds upon theirs by introducing refinements to the language. 
Crucially, their algorithm is both sound and complete, but a complete subtyping algorithm is impossible in the presence of refinements; thus, our algorithm is sound, but not complete. 
Session subtyping has been further explored in a variety of more specific contexts: for instance, \citet{Horne2024LAMP} develop and compare subtyping relation in both the isorecursive and equirecursive settings, and others \cite{Li24,Ghilezan23,Ghilezan19} propose subtyping for multiparty session types.
\citet{Mostrous15} propose subtyping for asynchronous session types in a higher-order $\pi$-calculus, extending the idea of a type simulation into an \emph{asynchronous} type simulation.
\citet{Bravetti2017IaC} show that asynchronous subtyping is undecidable in the presence of recursion.
In contrast to these works, ours is the first to explore session subtyping alongside a refinement system.
Our notion of refinements is adopted from \citet{dasVerified2020}, but their work was restricted to type equality and did not provide a type inference algorithm.

\vspace{-1em}
\section{Conclusion}
\vspace{-0.4em}

This paper presents a type inference algorithm for structural session types with arithmetic refinements. 
We develop a theoretical treatment of subtyping, introduce formal inference rules for constraint generation, 
and implement our algorithm in the Rast language. 
In addition, we detail the practical optimizations necessary to solve our constraints reliably,
and we demonstrate the benefits of these optimizations 
by evaluating the performance of our implementation on a number of examples.
In the future, we hope to extend our algorithm to other constructs in the Rast language,
namely \emph{temporal} and \emph{resource-aware} types \cite{Das20FSCD}.

\paragraph*{\textbf{Data Availability.}} 
The artifact associated
with this paper is publicly avialable 
at \url{https://doi.org/10.5281/zenodo.18155032}.

\bibliographystyle{splncs04nat}
\bibliography{stinf,references}

\newpage

\appendix
\section{Session Type Examples}\label{appendix-ex}
\paragraph*{\textbf{Closing and waiting.}} The process \lst{example1} has the following declaration:

\begin{lstlisting}[language=prest]
	decl example1 : (x : $\one$) (y : $\one$) |- (z : $\one$)
\end{lstlisting}

From its type, we deduce that it must wait for $x$ and $y$ to close, and then it closes the channel it offers, $z$. A suitable definition would be:

\begin{lstlisting}[language=prest]
	proc z <- example1 x y = wait x; wait y; close z
\end{lstlisting}

Alternatively, we could wait on $y$ before $x$.

\paragraph*{\textbf{Sending and receiving labels.}} We declare the process \lst{example1} as follows:

\begin{lstlisting}[language=prest]
	decl negate : (x: $\ichoice{\mb{true}: \one, \mb{false}: \one}$) |- (y: $\ichoice{\mb{true}: \one, \mb{false}: \one}$)
\end{lstlisting}

This process would receive a label from $x$, either $\m{true}$ or $\m{false}$, and then sends either $\m{true}$ or $\m{false}$ on $y$, depending on a case-analysis of what $x$ sent. We define \lst{negate} as:

\begin{lstlisting}[language=prest]
	proc y <- negate x = case x (
		true => y.false; wait y; close x
	  | false => y.true; wait y; close x
	)
\end{lstlisting}

The types do not tell us \emph{which} label \lst{negate} must send when— in the given definition, we send the opposite label, but we could also just always send $\mb{false}$; both programs would share the same declaration. However, the type system enforces that we cannot send on $x$ nor receive on $y$ a label outside of the label set $\{\mb{true},\mb{false}\}$.

\paragraph*{\textbf{Sending and receiving channels.}} An equivalent of function composition in our language could be written as the process $\m{comp}$ below:

\begin{lstlisting}[language=prest]
	decl comp : (x: $A \lolli B$), (y: $B \lolli C$) |- (z: $A \lolli C$)
	proc z <- comp x y =
		a <- recv z;
		send x a;
		send y x;
		z <-> y
\end{lstlisting}

We would first receive a channel $a$ of type $A$ from our provided channel $z$. We then send $a$ to $x$, which changes the type of $x$ into $B$. Now we send $x$ to $y$ to get $(y: C)$. We are left with $(y: C) \vdash (z: C)$, at which point we \emph{forward} between $z$ and $y$.

\paragraph*{\textbf{Type names and recursive types.}} Suppose our signature contains the type declaration $\m{nat} = \ichoice{\mb{zero}: \one, \mb{succ}: \m{nat}}$ Consider the following process:

\begin{lstlisting}[language=prest]
	decl consume : (x: $\m{nat}$) |- (y: $\one$)
	proc y <- consume x = case x (
		zero => y <-> x
	  | succ => y <- consume x
	)
\end{lstlisting}

Due to linearity, we can never disregard types in our context; if we wish to throw them away, we must instead transform them into a type which we know how to close. The process \lst{consume} does just that, taking a $\m{nat}$ and turning it into $\one$, which the client can then easily wait on. We implement this process by first case-analyzing the input: if $x$ sends $\mb{zero}$, we can simply forward, but if $x$ sends $\mb{succ}$, it recurses back to $\m{nat}$, so we \emph{spawn} $\m{consume}$ again with $x$ as input. By repeated respawns of $\m{consume}$, we "consume" all the labels $x$ sends until it turns into $\one$.

\paragraph*{\textbf{Assertion and assumption.}} Let our $\m{nat}$ type now refer to the declaration in a fresh $\Sigma$ that $\m{nat}[n] = \ichoice{\mb{zero}: \texists{n=0}{\one},\ \mb{succ}: \texists{n>0}{\m{nat}[n-1]}}$. Suppose we have a $\m{double}$ process which doubles the value of the input, declared as follows:

\begin{lstlisting}[language=prest]
	decl double[n] : (x: $\m{nat}$[n]) |- (y: $\m{nat}$[2*n])
	proc y <- double[n] x = case x (
		zero => assume x {n=0};
			y.zero; assert y {2*n=0};
			y <-> x
	  | succ => assume x {n>0};
	  		y.succ; assert y {2*n>0};
	  		y.succ; assert y {(2*n)-1>0};
	  		y <- double[n-1] x
	)
\end{lstlisting}

Note the refinement parameter $n$ in the process— when we spawn \lst{double}, we now specify an expression in place of $n$, which in this case corresponds to the value of the input channel $x$. When we receive a $\mb{zero}$ label from $x$, we also receive a proof that $n=0$; we use this fact to assert that $2*n=0$ on $y$ after sending $\mb{zero}$ on $y$. When we instead receive $\mb{succ}$ on $x$, we get that $n > 0$; our goal is to send $2$ $\mb{succ}$ messages on $y$. When we send the first, we use our assumption to assert that $2*n>0$, and the type of $y$ recurses to $\m{nat}[(2*n)-1]$ by our type definition. When we send a second $\mb{succ}$ label, we now must assert that $(2*n)-1 > 0$, which still follows from our assertion (and the fact that we work only in integers). We now have $(x: \m{nat}[n-1]) \vdash (y: \m{nat}[(2*n)-2]$, and we conclude by respawning $\m{double}$ with refinement $n-1$ instead of $n$.

\paragraph*{\textbf{Witnesses.}} Consider a type for binary numbers that sends bits $\m{b0}$ and $\m{b1}$, and terminates with $\m{e}$. If we want to express its value with a refinement, a natural thought would be to say:

\begin{align*}
\mi{type}\ \m{bin}[n] = \ichoice{
	&\mb{b0}: \texists{n>0}{\m{bin}[k / 2]},\\
	&\mb{b1}: \texists{n>0}{\m{bin}[(k-1) / 2]},\\
	&\mb{e}: \texists{n=0}{\one}}
\end{align*}

However, our arithmetic language lacks a division operator. Instead, we would use quantifiers to capture the same idea as follows:

\begin{align*}
\mi{type}\ \m{bin}[n] = \ichoice{
	&\mb{b0}: \texists{n>0}{\exists k.\texists{n=2k}{\m{bin}[k]}},\\
	&\mb{b1}: \texists{n>0}{\exists k.\texists{n=2k+1}{\m{bin}[k]}},\\
	&\mb{e}: \texists{n=0}{\one}}
\end{align*}

\vfill

\section{Validity}\label{appendix-valid}

We introduce three validity judgments, for signatures ($\vdash \Sigma\ \m{valid}$), declarations ($\vdash_\Sigma \Sigma'\ \m{valid}$), and types ($\mc{V} \semi \mc{C} \vdash_\Sigma A\ \m{valid}$).

\begin{mathpar}
\inferrule*[right=$\Sigma V$] {
  \vdash_\Sigma \Sigma\ \m{valid}
} {
  \vdash \Sigma\ \m{valid}
}
\and
\inferrule*[right=Nil$V$] { } {
  \vdash_\Sigma (\cdot)\ \m{valid}
}
\and
\inferrule*[right=ProcDecl$V$] {
    \vdash_\Sigma \Sigma'\ \m{valid}
} {
  \vdash_\Sigma \Sigma',\Delta \vdash f[\num{n}] :: (z: A)\ \m{valid}
}
\and
\inferrule*[right=ProcDef$V$] {
    \vdash_\Sigma \Sigma'\ \m{valid}
} {
  \vdash_\Sigma \Sigma',\espawn{x}{f[\num{n}]}{\num{y}} = P\ \m{valid}
}
\and
\inferrule*[right=TpDef$V$] {
  \vdash_\Sigma \Sigma'\ \m{valid}
  \and
  \num{n} \semi \top \vdash_\Sigma A\ \m{valid}
  \and
  A \neq V'[\num{e}]
} {
  \vdash_\Sigma \Sigma',V[\num{n}\ |\ \phi]=A\ \m{valid}
}
\and
\inferrule*[right=$\oplus V$] {
  (\forall \ell \in L)
  \and
  \mc{V} \semi \mc{C} \vdash_\Sigma A_\ell\ \m{valid}
} {
  \mc{V} \semi \mc{C} \vdash_\Sigma \ichoice{\ell: A_\ell}_{\ell \in L}\ \m{valid}
}
\and
\inferrule*[right=$\with V$] {
  (\forall \ell \in L)
  \and
  \mc{V} \semi \mc{C} \vdash_\Sigma A_\ell\ \m{valid}
} {
  \mc{V} \semi \mc{C} \vdash_\Sigma \echoice{\ell: A_\ell}_{\ell \in L}\ \m{valid}
}
\and
\inferrule*[right=$\tensor V$] {
  \mc{V} \semi \mc{C} \vdash_\Sigma A\ \m{valid}
  \and
  \mc{V} \semi \mc{C} \vdash_\Sigma B\ \m{valid}
} {
  \mc{V} \semi \mc{C} \vdash_\Sigma A \tensor B\ \m{valid}
}
\and
\inferrule*[right=$\lolli V$] {
  \mc{V} \semi \mc{C} \vdash_\Sigma A\ \m{valid}
  \and
  \mc{V} \semi \mc{C} \vdash_\Sigma B\ \m{valid}
} {
  \mc{V} \semi \mc{C} \vdash_\Sigma A \lolli B\ \m{valid}
}
\and
\inferrule*[right=$\one V$] { } {
  \mc{V} \semi \mc{C} \vdash_\Sigma \one\ \m{valid}
}
\and
\inferrule*[right=$?V$] {
  \mc{V} \semi \mc{C} \land \phi \vdash_\Sigma A\ \m{valid}
} {
  \mc{V} \semi \mc{C} \vdash_\Sigma \texists{\phi}{A}\ \m{valid}
}
\and
\inferrule*[right=$!V$] {
  \mc{V} \semi \mc{C} \land \phi \vdash_\Sigma A\ \m{valid}
} {
  \mc{V} \semi \mc{C} \vdash_\Sigma \tforall{\phi}{A}\ \m{valid}
}
\and
\inferrule*[right=$\exists V$] {
  \mc{V}, n \semi \mc{C} \vdash_\Sigma A\ \m{valid}
} {
  \mc{V} \semi \mc{C} \vdash_\Sigma \exists n.A\ \m{valid}
}
\and
\inferrule*[right=$\forall V$] {
  \mc{V}, n \semi \mc{C} \vdash_\Sigma A\ \m{valid}
} {
  \mc{V} \semi \mc{C} \vdash_\Sigma \forall n.A\ \m{valid}
}
\and
\inferrule*[right=def$V$] {
  V[\num{n}\ |\ \phi] \in \Sigma
  \and
  \mc{V} \semi \mc{C} \vDash \phi[\num{e} / \num{n}]
} {
  \mc{V} \semi \mc{C} \vdash_\Sigma V[\num{e}]\ \m{valid}
}
\end{mathpar}

\vfill

\section{Subtyping Rules}\label{appendix-st}
\begin{mathpar}
\inferrule*[right=$\m{st}_\oplus$] {
  (\forall \ell \in L) \and \mathcal{V} \semi \mathcal{C} \semi \Gamma \vdash A_\ell <: B_\ell
} {
  \mathcal{V} \semi \mathcal{C} \semi \Gamma \vdash \ichoice{\ell: A_\ell}_{\ell \in L} \st \ichoice{m: B_m}_{m \in M}
}
\and
\inferrule*[right=$\m{st}_\&$] {
  (\forall m \in M) \and \mathcal{V} \semi \mathcal{C} \semi \Gamma \vdash A_m \st B_m
} {
 \mathcal{V} \semi \mathcal{C} \semi \Gamma \vdash \echoice{\ell: A_\ell}_{\ell \in L} \st \echoice{m: B_m}_{m \in M}
}
\and
\inferrule*[right=$\m{st}_\tensor$] {
  \mathcal{V} \semi \mathcal{C} \semi \Gamma \vdash A_1 \st B_1
  \and 
  \mathcal{V} \semi \mathcal{C} \semi \Gamma \vdash A_2 \st B_2
} {
  \mathcal{V} \semi \mathcal{C} \semi \Gamma \vdash A_1 \tensor A_2 \st B_1 \tensor B_2
}
\and
\inferrule*[right=$\m{st}_\lolli$] {
  \mathcal{V} \semi \mathcal{C} \semi \Gamma \vdash B_1 \st A_1
  \and 
  \mathcal{V} \semi \mathcal{C} \semi \Gamma \vdash A_2 \st B_2
} {
  \mathcal{V} \semi \mathcal{C} \semi \Gamma \vdash A_1 \lolli A_2 \st B_1 \lolli B_2
}
\and
\inferrule*[right=$\m{st}_\one$] { } { 
  \mathcal{V} \semi \mathcal{C} \semi \Gamma \vdash \one \st \one
}
\and
\inferrule*[right=$\m{st}_?$] {
  \mathcal{V} \semi \mathcal{C} \vDash \phi \rightarrow \psi
  \and
  \mathcal{V} \semi \mathcal{C} \land \phi \semi \Gamma \vdash A \st B
} {
  \mathcal{V} \semi \mathcal{C} \semi \Gamma \vdash \texists{\phi}{A} \st \texists{\psi}{B}
}
\and
\inferrule*[right=$\m{st}_!$] {
  \mathcal{V} \semi \mathcal{C} \vDash \psi \rightarrow \phi
  \and
  \mathcal{V} \semi \mathcal{C} \land \psi \semi \Gamma \vdash A \st B
} {
  \mathcal{V} \semi \mathcal{C} \semi \Gamma \vdash \tforall{\phi}{A} \st \tforall{\psi}{B}
}
\and
\inferrule*[right=$\m{st}_{\exists}$] {
  \fresh{k}
  \and
  \mathcal{V}, k \semi \mathcal{C} \semi \Gamma \vdash A[k/m] \st B[k/n]
} {
  \mathcal{V} \semi \mathcal{C} \semi \Gamma \vdash \exists m . A \st \exists n . B
}
\and
\inferrule*[right=$\m{st}_{\forall}$] {
  \fresh{k}
  \and
  \mathcal{V}, k \semi \mathcal{C} \semi \Gamma \vdash A[k/m] \st B[k/n]
} {
  \mathcal{V} \semi \mathcal{C} \semi \Gamma \vdash \forall m . A \st \forall n . B
}
\and
\inferrule*[right=$\m{st}_\bot$] {
  \mathcal{V} \semi \mathcal{C} \vDash \bot
} {
  \mathcal{V} \semi \mathcal{C} \semi \Gamma \vdash A <: B
}
\and
\inferrule*[right=$\m{st}_\m{refl}$] {
  \mathcal{V} \semi \mathcal{C} \vDash e_1 = e_1' \land \cdots \land e_n = e_n'
} {
  \mathcal{V} \semi \mathcal{C} \semi \Gamma \vdash V[\num{e}] \st V[\num{e'}]
}
\and
\inferrule*[right=$\m{st}_\m{expd}$] {
  V_1[\num{v_1} \vert \phi_1] = A \in \Sigma
  \and 
  V_2[\num{v_2} \vert \phi_2] = B \in \Sigma
  \\\\
  \gamma = \langle \mathcal{V} \semi \mathcal{C} \semi V_1[\num{e_1}] \st V_2[\num{e_2}] \rangle
  \\\\
  \mathcal{V} \semi \mathcal{C} \semi \Gamma, \gamma \vdash A[\num{e_1}/\num{v_1}] \st B[\num{e_2}/\num{v_2}]
} {
  \mathcal{V} \semi \mathcal{C} \semi \Gamma \vdash V_1[\num{e_1}] \st V_2[\num{e_2}]
}
\and
\inferrule*[right=$\m{st}_\m{def}$] {
  \langle \mathcal{V}' \semi \mathcal{C}' \semi V_1[\num{e_1}'] \st V_2[\num{e_2}'] \rangle \in \Gamma
  \and
  \mathcal{V} \semi \mathcal{C} \vDash \exists \mathcal{V}' . \mathcal{C}' \land \num{e_1}' = \num{e_1} \land \num{e_2}' = \num{e_2}
} {
  \mathcal{V} \semi \mathcal{C} \semi \Gamma \vdash V_1[\num{e_1}] \st V_2[\num{e_2}]
}
\end{mathpar}

\vfill

\section{Soundness}\label{appendix-proof}

We first provide a series of auxiliary definitions to motivate a key lemma and our main proof.

\begin{definition}
For a substitution $\sigma$, we say $\mathcal{V} \semi \mathcal{C} \vDash \sigma$ to abbreviate that $\sigma$ is a ground substitution over $\mathcal{V}$ such that $\vDash \mathcal{C}[\sigma]$.
\end{definition}

\begin{definition}\label{forallvstc}
Given a relation $\mathcal{R}$ on valid ground types and two types $A, B$ with $\mathcal{V} \semi \mathcal{C} \vdash A, B\ \m{valid}$, we say $\forall \mathcal{V} . \mathcal{C} \rightarrow A \st_\mathcal{R} B$ if, for all substitutions $\sigma$ with $\mathcal{V} \semi \mathcal{C} \vDash \sigma$, we get $(A[\sigma], B[\sigma]) \in \mathcal{R}$.

We say $\forall \mathcal{V} . \mathcal{C} \rightarrow A \st B$ if there exists a type simulation $\mathcal{R}$ satisfying $\forall \mathcal{V} . \mathcal{C} \rightarrow A \st_\mathcal{R} B$.
\end{definition}

\begin{lemma}\label{thelemma}
Suppose $\forall \mathcal{V}' . \mathcal{C}' \rightarrow V_1[\num{e_1}'] \st_\mathcal{R} V_2[\num{e_2}']$, and assume that $\mathcal{V} \semi  \mathcal{C} \vDash \exists \mathcal{V}' . \mathcal{C}' \land \num{e_1}'=\num{e_1} \land \num{e_2}'=\num{e_2}$. Then it follows that $\forall \mathcal{V} . \mathcal{C} \rightarrow V_1[\num{e_1}] \st_\mathcal{R} V_2[\num{e_2}]$.
\end{lemma}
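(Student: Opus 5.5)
The plan is to unfold Definition~\ref{forallvstc} directly and prove the statement pointwise. Fix a ground substitution $\sigma$ with $\mathcal{V} \semi \mathcal{C} \vDash \sigma$, that is, $\sigma$ is ground over $\mathcal{V}$ and $\vDash \mathcal{C}[\sigma]$. We must show $(V_1[\num{e_1}[\sigma]], V_2[\num{e_2}[\sigma]]) \in \mathcal{R}$. The free variables of $\num{e_1}, \num{e_2}$ lie in $\mathcal{V}$, so applying $\sigma$ yields ground index expressions.

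First we read the entailment hypothesis at $\sigma$. The judgment $\mathcal{V} \semi \mathcal{C} \vDash \exists \mathcal{V}' . \mathcal{C}' \land \num{e_1}'=\num{e_1} \land \num{e_2}'=\num{e_2}$ means that for every such $\sigma$ the closed formula $(\exists \mathcal{V}' . \mathcal{C}' \land \num{e_1}' = \num{e_1} \land \num{e_2}' = \num{e_2})[\sigma]$ is true. The variables $\mathcal{V}'$ are bound here; if they clash with $\mathcal{V}$ we first $\alpha$-rename them, which is harmless. Since arithmetic ranges over naturals, we obtain a ground substitution $\sigma'$ over $\mathcal{V}'$ such that $\vDash \mathcal{C}'[\sigma']$, $\num{e_1}'[\sigma'] = \num{e_1}[\sigma]$ and $\num{e_2}'[\sigma'] = \num{e_2}[\sigma]$ as natural numbers, componentwise. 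The expressions $\num{e_i}'$ mention only $\mathcal{V}'$ and $\num{e_i}$ only $\mathcal{V}$, so applying the combined substitution $\sigma \cup \sigma'$ agrees with applying each part separately.

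Next we use the first hypothesis. Because $\mathcal{V}' \semi \mathcal{C}' \vDash \sigma'$, the assumption $\forall \mathcal{V}' . \mathcal{C}' \rightarrow V_1[\num{e_1}'] \st_\mathcal{R} V_2[\num{e_2}']$ gives $(V_1[\num{e_1}'[\sigma']], V_2[\num{e_2}'[\sigma']]) \in \mathcal{R}$. By the equalities above, these ground types are the same as $V_1[\num{e_1}[\sigma]]$ and $V_2[\num{e_2}[\sigma]]$, which is what we wanted. Since $\sigma$ was arbitrary, the conclusion follows.

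The only delicate step is the identification of ground types in the last paragraph. A relation on ground types compares index arguments by their values, so $V[\num{t}]$ and $V[\num{t'}]$ with $\num{t}$ and $\num{t'}$ evaluating equal must count as the same element. I would make this explicit by stipulating that ground types are taken up to evaluation of closed index expressions to numerals, and by noting that $\m{unfold}_\Sigma$ respects this convention. With that in place, everything else is bookkeeping about substitution domains. The side condition that both types are valid under $\mathcal{V} \semi \mathcal{C}$ transfers from the assumed validity of the given data.
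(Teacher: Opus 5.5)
Your proof is correct and follows the paper's argument step for step. You fix $\sigma$ with $\vDash \mathcal{C}[\sigma]$, instantiate the entailment at $\sigma$ to get a witness $\sigma'$ over $\mathcal{V}'$ with $\vDash \mathcal{C}'[\sigma']$ and matching index values, apply the first hypothesis at $\sigma'$, and identify the resulting ground types. Your extra remarks on $\alpha$-renaming, on identifying ground types up to the values of their indices, and on validity transferring through the same witness argument are points the paper leaves implicit.
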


\begin{proof}
By definition of $\st_\mathcal{R}$, it suffices to show that, for all substitutions $\sigma$ such that $\mathcal{V} \semi \mathcal{C} \vDash \sigma$, we get $V_1[\num{e_1}[\sigma]] \st_\mathcal{R} V_2[\num{e_2}[\sigma]]$.

Take an arbitrary such $\sigma$: since $\vDash \mathcal{C}[\sigma]$, we can apply $\sigma$ to our second assumption, yielding that $\exists \mathcal{V}' . \mathcal{C}' \land \num{e_1}'=\num{e_1}[\sigma] \land \num{e_2}'=\num{e_2}[\sigma]$. By definition, there is \emph{some} ground substitution $\sigma'$ over $\mathcal{V}'$ such that $\vDash \mathcal{C}'[\sigma']$, $\num{e_1}'[\sigma']=\num{e_1}[\sigma]$, and $\num{e_2}'[\sigma']=\num{e_2}[\sigma]$.

From our first assumption, for \emph{any} such $\sigma'$ with $\vDash C'[\sigma']$, we get $V_1[\num{e_1}'[\sigma']] \st_\mathcal{R} V_2[\num{e_2}'[\sigma']]$. Since $\num{e_1}'[\sigma']=\num{e_1}[\sigma]$ and $\num{e_2}'[\sigma']=\num{e_2}[\sigma]$, we get $V_1[\num{e_1}[\sigma]] \st_\mathcal{R} V_2[\num{e_2}[\sigma]]$.

Since this applies for any such $\sigma$, we are done as described above: by definition, $\forall \mathcal{V} . \mathcal{C} \rightarrow V_1[\num{e_1}] \st_\mathcal{R} V_2[\num{e_2}]$.
\end{proof}

Recall the statement of Theorem \ref{thm:sound}: if $\mathcal{V}_0 \semi \mathcal{C}_0 \semi \cdot \Vdash A_0 \st B_0$, then $\forall \mathcal{V}_0 . \mc{C}_0 \rightarrow A_0 \st B_0$.

\paragraph*{\textbf{Proof of Theorem \ref{thm:sound}.}} From the antecedent we get a derivation $\mathcal{D}_0$ of $\mathcal{V}_0, \mathcal{C}_0, \cdot \Vdash A_0 \st B_0$. Define $\mathcal{R}$ on closed valid types as:
$$\mathcal{R} = \{(A[\sigma], B[\sigma])\ \vert\ \langle \mathcal{V} \semi \mathcal{C} \semi A \st B \rangle \in S(\mathcal{D}_0) \text{ and } \mathcal{V} \semi \mathcal{C} \vDash \sigma\}$$

We will show $\mathcal{R}$ is a type simulation. To do so, we consider arbitrary $(A[\sigma], B[\sigma]) \in \mathcal{R}$; by definition of $\mathcal{R}$, there must be some closure $\langle \mathcal{V} \semi \mathcal{C} \semi A \st B\rangle \in S(\mathcal{D}_0)$ and some $\sigma$ with $\mathcal{V} \semi \mathcal{C} \vDash \sigma$.

Consider first the case where $\mathcal{V} \semi \mathcal{C} \vDash \bot$. It follows from the $\m{st}_\bot$ rule that $\langle \mathcal{V} \semi \mathcal{C} \semi A \st B \rangle \in S(\mathcal{D}_0)$. Furthermore, $\forall \mathcal{V} . \mathcal{C} \rightarrow A \st B$ is vacuously true, and so soundness holds.

If instead $\mathcal{V} \semi \mathcal{C} \not \vDash \bot$, it follows that there exists some ground substitution $\sigma$ on $\mathcal{V}$ that satisfies $\mathcal{C}$, i.e. $\mathcal{V} \semi \mathcal{C} \vDash \sigma$. We proceed by case-analysis on the structure of $A$ with an arbitrary such $\sigma$. Most cases follow by simple structural analysis; we include a subset for demonstrative purposes here.

If $A=\ichoice{\ell : A_\ell}_{\ell \in L}$, then by enumeration of rules, we must have $B=\ichoice{m: B_m}_{m \in M}$. It follows from $\m{st}_\oplus$ that, for all $\ell \in L$, we get $\langle \mathcal{V} \semi \mathcal{C} \semi A_\ell \st B_\ell \rangle \in S(\mathcal{D}_0)$. By definition of $\mathcal{R}$, we then have $(A_\ell[\sigma], B_\ell[\sigma]) \in \mathcal{R}$. Since $A[\sigma]=\ichoice{\ell : A_\ell[\sigma]}_{\ell \in L}$ and $B[\sigma]=\ichoice{m : B_m[\sigma]}_{m \in M}$, we conclude that, for an arbitrary $\sigma$, if $(A[\sigma], B[\sigma]) \in \mathcal{R}$, then $(A_\ell[\sigma], B_\ell[\sigma]) \in \mathcal{R}$ for all $\ell \in L$. Thus, we have satisfied the condition for a type simulation.

If $A=\texists{\phi}A'$, then by enumeration of rules, we must have $B=\texists{\psi}B'$. It follows from $\m{st}_?$ that we get $\langle \mathcal{V} \semi \mathcal{C} \land \phi \semi A' \st B' \rangle \in S(\mathcal{D}_0)$, as well as the semantic judgment $\mathcal{V} \semi \mathcal{C} \vDash \phi \rightarrow \psi$. Since our considered $\sigma$ satisfies $\vDash \mathcal{C}[\sigma]$, we have two sub-cases: either $\vDash \phi[\sigma]$ or $\not \vDash \phi[\sigma]$. If $\vDash \phi[\sigma]$, it must follow that $\vDash (\mathcal{C} \land \phi)[\sigma]$, so by definition of $\mathcal{R}$ we get $(A'[\sigma], B'[\sigma]) \in \mathcal{R}$. Furthermore, since $\vDash \phi[\sigma]$ and $\mathcal{V} \semi \mathcal{C} \vDash \phi \rightarrow \psi$, we also have $\vDash \psi[\sigma]$. Since $A[\sigma]=\texists{\phi[\sigma]}A'$ and $B[\sigma]=\texists{\psi[\sigma]}B'$, we satisfy condition (1) for a type simulation. If instead $\not \vDash \phi[\sigma]$, then we trivially satisfy condition (2) for a type simulation; thus, in either subcase, it holds that $\mathcal{R}$ is a type simulation.

If $A=\exists m.A'$, then by enumeration of rules, $B=\exists n.B'$. It follows from $\m{st}_\exists$ that we get $\langle \mathcal{V}, k \semi \mathcal{C} \semi A'[k/m] \st B'[k/m] \rangle \in S(\mathcal{D}_0)$ for some fresh $k$. Since $k$ is fresh, $k \notin C$, and so for any $i \in \mathbb{N}$ we get $(A'[\sigma, i/k], B'[\sigma, i/k]) \in \mathcal{R}$, thereby satisfying that $\mathcal{R}$ is a type simulation.

If $A=V_1[\num{e}_1]$, then we have three sub-cases: either the $\m{st}_\m{refl}$ rule, the $\m{st}_\m{def}$ rule, or the $\m{st}_\m{expd}$ rule applies; we consider them in that order. If we apply the $\m{st}_\m{def}$ rule, then we do not add anything to $S(\mathcal{D}_0)$ by definition of $\mc{R}$, but in fact doing so is unnecessary. The first premise of the $\m{st}_\m{def}$ rule tells us that we have already seen $\langle \mathcal{V}' \semi \mathcal{C}' \semi V_1[\num{e_1}'] \st V_2[\num{e_2}']\rangle$, implying that $\forall \mathcal{V}' . \mathcal{C}' \rightarrow  V_1[\num{e_1}'] \st_\mathcal{R} V_2[\num{e_2}']$. It follows from Lemma \ref{thelemma} and the second premise to the $\m{st}_\m{def}$ rule that $V_1[\num{e_1}[\sigma]] \st_\mathcal{R} V_2[\num{e_2}[\sigma]]$, so therefore $(V_1[\num{e_1}[\sigma]], V_2[\num{e_2}[\sigma]]) \in \mathcal{R}$.

We now conclude our proof. Since our derivation $\mathcal{D}_0$ must prove $\mathcal{V}_0, \mathcal{C}_0, \cdot \Vdash A \st B$, we have $\langle \mathcal{V}_0 \semi \mathcal{C}_0 \semi A_0 \st B_0\rangle \in S(\mathcal{D}_0)$ by definition of $S$. Then, by definition of $\mathcal{R}$, we have $(A_0[\sigma], B_0[\sigma]) \in \mathcal{R}$ for any $\sigma$ over $\mathcal{V}_0$ satisfying $\vDash \mathcal{C}_0[\sigma]$. By Definition \ref{forallvstc}, we then say $\forall \mathcal{V}_0 . \mathcal{C}_0 \rightarrow A_0 \st_\mathcal{R} B_0$, and finally, since $\mathcal{R}$ is a type simulation, we say $\forall \mathcal{V}_0 . \mathcal{C}_0 \rightarrow A_0 \st B_0$ and we are done.

\vfill

\section{Constraint Generation Rules}\label{appendix-cgen}

\begin{mathpar}
\inferrule*[right=$\oplus$R] {
  \mc{V} \semi \mc{C} \Vdash \ichoice{k: \textcolor{blue}{B}} \st A
  \and
  \mathcal{V} \semi \mathcal{C} \semi \Delta \vdash P :: (x: \textcolor{blue}{B})
} {
  \mathcal{V} \semi \mathcal{C} \semi \Delta \vdash x.k \semi P :: (x: A)
}
\and
\inferrule*[right=$\oplus$L] {
  \mc{V} \semi \mc{C} \Vdash A \st \ichoice{\ell: \textcolor{blue}{A_\ell}}_{\ell \in L}
  \and 
  (\forall \ell \in L) \quad
  \mathcal{V} \semi \mathcal{C} \semi \Delta, (x: \textcolor{blue}{A_\ell}) \vdash Q_\ell :: (z: C)
} {
  \mathcal{V} \semi \mathcal{C} \semi \Delta, (x: A) \vdash \ecase{x}{\ell}{Q_\ell} :: (z: C)
}
\and
\inferrule*[right=$\with$R] {
  \mc{V} \semi \mc{C} \Vdash \echoice{\ell: \textcolor{blue}{A_\ell}}_{\ell \in L} \st A
  \and
  (\forall \ell \in L) \quad
  \mathcal{V} \semi \mathcal{C} \semi \Delta \vdash P_\ell :: , (x: \textcolor{blue}{A_\ell})
} {
  \mathcal{V} \semi \mathcal{C} \semi \Delta \vdash \ecase{x}{\ell}{P_\ell} :: (x: A)
}
\and
\inferrule*[right=$\with$L] {
  \mc{V} \semi \mc{C} \Vdash A \st \echoice{k: \textcolor{blue}{A_k}}
  \and
  \mathcal{V} \semi \mathcal{C} \semi \Delta, (x: \textcolor{blue}{A_k}) \vdash Q :: (z: C)
} {
  \mathcal{V} \semi \mathcal{C} \semi \Delta, (x: A) \vdash x.k \semi P :: (z: C)
}
\and
\inferrule*[right=$\tensor$R] {
  \mc{V} \semi \mc{C} \Vdash A_1 \tensor \textcolor{blue}{A_2} \st A
  \and
  \mathcal{V} \semi \mathcal{C} \semi \Delta \vdash P :: (x: \textcolor{blue}{A_2})
} {
  \mathcal{V} \semi \mathcal{C} \semi \Delta, (y: A_1) \vdash \esend{x}{y} \semi P :: (x : A)
}
\and
\inferrule*[right=$\tensor$L] {
  \mc{V} \semi \mc{C} \Vdash A \st \textcolor{blue}{A_1} \tensor \textcolor{blue}{A_2}
  \and
  \mathcal{V} \semi \mathcal{C} \semi \Delta, (y: \textcolor{blue}{A_1}), (x: \textcolor{blue}{A_2}) \vdash Q :: (z: C)
} {
  \mathcal{V} \semi \mathcal{C} \semi \Delta, (x: A) \vdash \erecv{x}{y} \semi Q :: (z : C)
}
\and
\inferrule*[right=$\lolli$R] {
  \mc{V} \semi \mc{C} \Vdash \textcolor{blue}{A_1} \lolli \textcolor{blue}{A_2}  \st A
  \and
  \mathcal{V} \semi \mathcal{C} \semi \Delta, (y: \textcolor{blue}{A_1}) \vdash P :: (x: \textcolor{blue}{A_2})
} {
  \mathcal{V} \semi \mathcal{C} \semi \Delta \vdash \erecv{x}{y} \semi P :: (x : A)
}
\and
\inferrule*[right=$\lolli$L] {
  \mc{V} \semi \mc{C} \Vdash A \st A_1 \lolli \textcolor{blue}{A_2}
  \and
  \mathcal{V} \semi \mathcal{C} \semi \Delta, (x: \textcolor{blue}{A_2}) \vdash Q :: (z: C)
} {
  \mathcal{V} \semi \mathcal{C} \semi \Delta, (x: A), (y: A_1) \vdash \esend{x}{y} \semi Q :: (z : C)
}
\and
\inferrule*[right=$\one$R] {
  \mathcal{V} \semi \mathcal{C} \Vdash \one \st A
} {
  \mathcal{V} \semi \mathcal{C} \semi \cdot \vdash \eclose{x} :: (x: A)
}
\and
\inferrule*[right=$\one$L] {
  \mathcal{V} \semi \mathcal{C} \Vdash A \st \one
  \and
  \mathcal{V} \semi \mathcal{C} \semi \Delta \vdash Q :: (z: C)
} {
  \mathcal{V} \semi \mathcal{C} \semi \Delta, (x: \one) \vdash \ewait{x} \semi Q :: (z: C)
}
\and
\inferrule*[right=$\m{id}$] {
  \mathcal{V} \semi \mathcal{C} \Vdash A \st B
} {
  \mathcal{V} \semi \mathcal{C} \semi (y: A) \vdash \fwd{x}{y} :: (x: B)
}
\and
\inferrule*[right=$\m{def}$] {
  (\num{y_i' : B_i'})_{i \in I} \vdash f[\num{n}\: |\: \phi] = P_f :: (x' : A') \in \Sigma
  \and\\
  \mathcal{V} \semi \mathcal{C} \land \phi[\num{e}/\num{n}] \Vdash A'[\num{e}/\num{n}] \st  \textcolor{blue}{A}
  \and
  (i \in I) \quad \mathcal{V} \semi \mathcal{C} \land \phi[\num{e}/\num{n}] \Vdash B_i \st B_i' [\num{e}/\num{n}]
  \and\\
  \mathcal{V} \semi \mathcal{C} \semi \Delta, (x:  \textcolor{blue}{A}[\num{e}/\num{n}]) \vdash Q :: (z: C)
} {
  \mathcal{V} \semi \mathcal{C} \semi \Delta, (\num{y_i : B_i})_{i \in I} \vdash \espawn{x}{f[\num{e}]}{\num{y}} \semi Q :: (z: C)
}
\end{mathpar}
\vfill
\begin{mathpar}
\inferrule*[right=?R] {
  \mathcal{V} \semi \mathcal{C} \vDash \phi
  \and
  \mathcal{V} \semi \mathcal{C} \Vdash \texists{\phi}{\textcolor{blue}{A'}} \st A
  \and
  \mathcal{V} \semi \mathcal{C} \semi \Delta \vdash P :: (x: \textcolor{blue}{A'})
} {
  \mathcal{V} \semi \mathcal{C} \semi \Delta \vdash \eassert{x}{\phi} \semi P :: (x: A)
}
\and
\inferrule*[right=?L] {
  \mathcal{V} \semi \mathcal{C} \Vdash A \st \texists{\phi}{\textcolor{blue}{A'}}
  \and
  \mathcal{V} \semi \mathcal{C} \land \phi \semi \Delta, (x: \textcolor{blue}{A'}) \vdash Q :: (z: C)
} {
  \mathcal{V} \semi \mathcal{C} \semi \Delta, (x: A) \vdash \eassume{x}{\phi} \semi Q :: (z: C)
}
\and
\inferrule*[right=!R] {
  \mathcal{V} \semi \mathcal{C} \Vdash \tforall{\phi}{\textcolor{blue}{A'}} \st A
  \and
  \mathcal{V} \semi \mathcal{C} \land \phi \semi \Delta \vdash P :: (x: \textcolor{blue}{A'})
} {
  \mathcal{V} \semi \mathcal{C} \semi \Delta \vdash \eassume{x}{\phi} \semi P :: (x: A)
}
\and
\inferrule*[right=!L] {
  \mathcal{V} \semi \mathcal{C} \vDash \phi
  \and
  \mathcal{V} \semi \mathcal{C} \Vdash A \st \tforall{\phi}{\textcolor{blue}{A'}}
  \and
  \mathcal{V} \semi \mathcal{C} \semi \Delta, (x: \textcolor{blue}{A'}) \vdash Q :: (z: C)
} {
  \mathcal{V} \semi \mathcal{C} \semi \Delta, (x: A) \vdash \eassert{x}{\phi} \semi Q :: (z: C)
}
\and
\inferrule*[right=$\exists$R] {
  \mathcal{V} \semi \mathcal{C} \vDash e \geq 0
  \and
  \mathcal{V} \semi \mathcal{C} \Vdash \exists n.\textcolor{blue}{A'} \st A
  \and
  \mathcal{V} \semi \mathcal{C} \land n=e \semi \Delta \vdash P :: (x: \textcolor{blue}{A'})
} {
  \mathcal{V} \semi \mathcal{C} \semi \Delta \vdash \esend{x}{\{e\}} \semi P :: (x: A)
}
\and
\inferrule*[right=$\exists$L] {
  \mathcal{V} \semi \mathcal{C} \Vdash A \st \exists n.\textcolor{blue}{A'}
  \and
  \mathcal{V}, n \semi \mathcal{C} \semi \Delta, (x: \textcolor{blue}{A'}) \vdash Q_n :: (z: C)
} {
  \mathcal{V} \semi \mathcal{C} \semi \Delta, (x: A) \vdash \erecv{x}{\{n\}} \semi Q_n :: (z: C)
}
\and
\inferrule*[right=$\forall$R] {
  \mathcal{V} \semi \mathcal{C} \semi \forall n.\textcolor{blue}{A'} \st A
  \and
  \mathcal{V}, n \semi \mathcal{C} \semi \Delta \vdash P_n :: (x: \textcolor{blue}{A'})
} {
  \mathcal{V} \semi \mathcal{C} \semi \Delta \vdash \erecv{x}{\{n\}} \semi P_n :: (x: A)
}
\and
\inferrule*[right=$\forall$L] {
  \mathcal{V} \semi \mathcal{C} \vDash e \geq 0
  \and
  \mathcal{V} \semi \mathcal{C} \Vdash A \st \forall n.\textcolor{blue}{A'}
  \and
  \mathcal{V} \semi \mathcal{C} \land n=e \semi \Delta, (x: \textcolor{blue}{A'}) \vdash Q :: (z: C)
} {
  \mathcal{V} \semi \mathcal{C} \semi \Delta, (x: A) \vdash \esend{x}{\{e\}} \semi Q :: (z: C)
}
\end{mathpar}

\vfill

\section{Typechecking Rules}\label{appendix-tc}

\begin{mathpar}
\inferrule*[right=tc-$\oplus$R] {
  (k \in L)
  \and
  \mathcal{V} \semi \mathcal{C} \semi \Delta \vdash P :: (x: A_k)
} {
  \mathcal{V} \semi \mathcal{C} \semi \Delta \vdash x.k \semi P :: (x: \ichoice{\ell: A_\ell}_{\ell \in L})
}
\and
\inferrule*[right=tc-$\oplus$L] {
  (\forall \ell \in L)
  \and
  \mathcal{V} \semi \mathcal{C} \semi \Delta, (x: A_\ell) \vdash P :: (z: C)
} {
  \mathcal{V} \semi \mathcal{C} \semi \Delta, (x:  \ichoice{\ell: A_\ell}_{\ell \in L}) \vdash \ecase{x}{\ell}{Q_\ell} :: (z: C)
}
\and
\inferrule*[right=tc-$\with$R] {
  (\forall \ell \in L)
  \and
  \mathcal{V} \semi \mathcal{C} \semi \Delta \vdash P :: (x: A_\ell)
} {
  \mathcal{V} \semi \mathcal{C} \semi \Delta \vdash \ecase{x}{\ell}{Q_\ell} :: (x: \echoice{\ell: A_\ell}_{\ell \in L})
}
\and
\inferrule*[right=tc-$\with$L] {
  (k \in L)
  \and
  \mathcal{V} \semi \mathcal{C} \semi \Delta, (x: A_k) \vdash Q :: (z: C)
} {
  \mathcal{V} \semi \mathcal{C} \semi \Delta, (x: \echoice{\ell: A_\ell}_{\ell \in L}) \vdash x.k \semi Q :: (z: C)
}
\and
\inferrule*[right=tc-$\tensor$R] {
  \mathcal{V} \semi \mathcal{C} \semi \Delta \vdash P :: (x: B)
} {
  \mathcal{V} \semi \mathcal{C} \semi \Delta, (e: A) \vdash \esend{x}{e} \semi P :: (x: A \tensor B)
}
\and
\inferrule*[right=tc-$\tensor$L] {
  \mathcal{V} \semi \mathcal{C} \semi \Delta, (x: B), (y: A) \vdash Q :: (z: C)
} {
  \mathcal{V} \semi \mathcal{C} \semi \Delta, (x: A \tensor B) \vdash \erecv{x}{y} \semi Q :: (z: C)
}
\and
\inferrule*[right=tc-$\lolli$R] {
  \mathcal{V} \semi \mathcal{C} \semi \Delta, (y: A) \vdash P :: (x: B)
} {
  \mathcal{V} \semi \mathcal{C} \semi \Delta \vdash \erecv{x}{y} \semi P :: (x: A \lolli B)
}
\and
\inferrule*[right=tc-$\lolli$L] {
  \mathcal{V} \semi \mathcal{C} \semi \Delta, (x: B) \vdash Q :: (z: C)
} {
  \mathcal{V} \semi \mathcal{C} \semi \Delta, (x: A \lolli B), (e: A) \vdash \esend{x}{e} \semi Q :: (z: C)
}
\and
\inferrule*[right=tc-$\one$R] { } {
  \mathcal{V} \semi \mathcal{C} \semi \cdot \vdash \eclose{x} :: (x: \one)
}
\and
\inferrule*[right=tc-$\one$L] {
    \mc{V} \semi \mc{C} \semi \Delta \vdash Q :: (z: C)
} {
  \mathcal{V} \semi \mathcal{C} \semi \Delta, (x: \one) \vdash \ewait{x} \semi Q :: (z: C)
}
\and
\inferrule*[right=tc-id] { } {
  \mathcal{V} \semi \mathcal{C} \semi (y: A) \vdash \fwd{x}{y} :: (x: A)
}
\and
\inferrule*[right=tc-def] {
  (\num{y_i' : B_i})_{i \in I} \vdash f[\num{n}] = P_f :: (x' : A) \in \Sigma
  \and
  \mc{V} \semi \mc{C} \semi \Delta, (x: A[\num{e}/\num{n}]) \vdash Q :: (z: C)
  } {
  \mathcal{V} \semi \mathcal{C} \semi \Delta, \num{(y_i: B_i)}_{i \in I} \vdash \espawn{x}{f[\num{e}]}{\num{y}} \semi Q :: (z: C)
}
\end{mathpar}
\vfill
\begin{mathpar}
\inferrule*[right=tc-?R] {
  \mc{V} \semi \mc{C} \vDash \phi
  \and
  \mathcal{V} \semi \mathcal{C} \semi \Delta \vdash P :: (x: A)
} {
  \mathcal{V} \semi \mathcal{C} \semi \Delta \vdash \eassert{x}{\phi} \semi P :: (x: \texists{\phi}{A})
}
\and
\inferrule*[right=tc-?L] {
  \mathcal{V} \semi \mathcal{C} \land \phi \semi \Delta, (x: A) \vdash Q :: (z: C)
} {
  \mathcal{V} \semi \mathcal{C} \semi \Delta, (x: \texists{\phi}{A}) \vdash \eassume{x}{\phi} \semi Q :: (z: C)
}
  \and
\inferrule*[right=tc-!R] {
  \mathcal{V} \semi \mathcal{C} \land \phi \semi \Delta \vdash P :: (x: A)
} {
  \mathcal{V} \semi \mathcal{C} \semi \Delta \vdash \eassume{x}{\phi} \semi P :: (x: \tforall{\phi}{A})
}
\and
\inferrule*[right=tc-!L] {
  \mc{V} \semi \mc{C} \vDash \phi
  \and
  \mathcal{V} \semi \mathcal{C} \semi \Delta, (x: A) \vdash Q :: (z: C)
} {
  \mathcal{V} \semi \mathcal{C} \semi \Delta, (x: \tforall{\phi}{A}) \vdash \eassert{x}{\phi} \semi Q :: (z: C)
}
\and
\inferrule*[right=tc-$\exists$R] {
  \mc{V} \semi \mc{C} \vdash e : \m{nat}
  \and
  \mathcal{V} \semi \mathcal{C} \semi \Delta \vdash P :: (x: A[e/n])
} {
  \mathcal{V} \semi \mathcal{C} \semi \Delta \vdash \esend{x}{\{e\}} \semi P :: (x: \exists n . A)
}
\and
\inferrule*[right=tc-$\exists$L] {
  \mathcal{V}, n \semi \mathcal{C} \semi \Delta, (x: A) \vdash Q_n :: (z: C)
} {
  \mathcal{V} \semi \mathcal{C} \semi \Delta, (x: \forall n . A) \vdash \erecv{x}{\{n\}} \semi Q_n :: (z: C)
}
\and
\inferrule*[right=tc-$\forall$R] {
  \mathcal{V}, n \semi \mathcal{C} \semi \Delta \vdash P_n :: (x: A)
} {
  \mathcal{V} \semi \mathcal{C} \semi \Delta \vdash \erecv{x}{\{n\}} \semi P_n :: (x: \forall n . A)
}
\and
\inferrule*[right=tc-$\forall$L] {
  \mc{V} \semi \mc{C} \vdash e : \m{nat}
  \and
  \mathcal{V} \semi \mathcal{C} \semi \Delta, (x: A[e/n]) \vdash Q :: (z: C)
} {
  \mathcal{V} \semi \mathcal{C} \semi \Delta, (x: \exists n.A) \vdash \esend{x}{\{e\}} \semi Q :: (z: C)
}
\end{mathpar}

\end{document}